\title{Substitution in the $\lambda$-Calculus and the role of the Curry School
}
\titlerunning{Substitution in the $\lambda$-Calculus}
\authorrunning{Kamareddine}
\newcounter{flagcounter}
\newcounter{flagline}
\newlength{\derivSkip}
\newlength{\myheight}
\newcommand{\makehigher}[2]{
   \settoheight{\myheight}{$#2$}
   \addtolength{\myheight}{#1}
   \raisebox{0pt}[\myheight]{$#2$}
   }
\newcommand{\flagbox}[1]{
         \setlength{\fboxsep}{0.5ex}
         \fbox{$#1$}}
\newcommand{\derline}[3]
{\refstepcounter{flagline}
 & 
 \ifthenelse
     {\value{flagcounter}=0}
     {\begin{array}[b]{l}
         \ \makehigher{\derivSkip}{#2}
      \end{array}
     } 
     {\begin{array}[b]{*{\value{flagcounter}}{@{\hspace{2\arraycolsep}}|}l}
         \ \makehigher{\derivSkip}{#2}
      \end{array}
     }
 & 
 \mbox{#3} 
 \\
}
\newcommand{\flagstart}[3]
{\refstepcounter{flagline}
 & 
 \ifthenelse
     {\value{flagcounter}=0}
     {\begin{array}[b]{@{\hspace{2\arraycolsep}}l} 
         \makehigher{\derivSkip}{\flagbox{#2}}
      \end{array}
     }
     {\begin{array}[b]{*{\value{flagcounter}}{@{\hspace{2\arraycolsep}}|}@{\hspace{2\arraycolsep}}l}
         \makehigher{\derivSkip}{\flagbox{#2}}
      \end{array}
     }
 \addtocounter{flagcounter}{1}
 & 
 \mbox{#3}
 \\
}
\newcommand{\xmark}{\ding{55}}%
\newcommand*\circled[1]{\tikz[baseline=(char.base)]{
    \node[shape=circle,draw,inner sep=1pt] (char) {#1};}}
\newcounter{parens}
\def\countlparen{%
    \addtocounter{parens}{1}\lparen_{\ensuremath{_{\the\value{parens}}}}%
}
\def\countrparen{%
    \rparen_{\ensuremath{_{\the\value{parens}}}}\addtocounter{parens}{-1}%
}
\let\lparen(
\let\rparen)
    \gdef\countparens{%
        \let(\countlparen
        \let)\countrparen
    }
\newenvironment{nested parentheses}
{%
    \catcode`(\active
    \catcode`)\active
    \countparens
    \setcounter{parens}{0}%
}
{}
\def\disp{\displaystyle}
\newcommand{\mytfunsym}{\stackrel{\hbox{\tiny tot}}{\longrightarrow}}
\newcommand\mypfun{\stackrel{\circ}{\longrightarrow}}
\newcommand{\la}{\lambda}
\newcommand{\La}{\Lambda}
\newcommand \fik {\varphi^i_k}
\newcommand \fikl {\varphi^i_{k+1}}
\newcommand \lar {\longrightarrow}
\newcommand \ls {\lambda\sigma}
\newcommand \uik {U^i_k}
\newcommand \uikl {U^i_{k+1}}
\newcommand \las {\la s}
\newcommand \Las {\La s}
\newcommand \si {\,\sigma^i}
\newcommand \sj {\,\sigma^j}
\newcommand \sil {\,\sigma^{i+1}\,}
\newcommand{\tea}{\ate}
\newcommand{\teb}{\bte}
\newcommand{\tec}{\cte}
\newcommand \um {{\texttt 1}}
\newcommand{\MetaSubst}[3]{{#1}\{\!\!\{{\texttt #2}\!\leftarrow \!{#3}\}\!\!\}}
\newcommand{\all}{\MetaSubst}
\newcommand{\be}{\beta}
\newcommand{\beal}{\be'}
\newcommand{\beall}{\be''}
\newcommand{\cM}{\cal M}
\newcommand{\mycC}{\cal C}
\newcommand{\ate}{A}
\newcommand{\bte}{B}
\newcommand{\cte}{C}
\newcommand{\den}[1]{\llbracket #1 \rrbracket}
\newcommand{\cut}[1]{}
\def\rrightarrow{\rightarrow \hspace{-.8em} \rightarrow}
\newcommand{\alre}{\mbox{$\rightarrow_{\alpha}$}}
\newtheorem{lemma}{Lemma}
\newtheorem{example}{Example}
 \newtheorem{definition}{Definition}
 \newtheorem{convention}[definition]{Convention}
\begin{document}
 \nopagenumber
\maketitle
\begin{abstract}
Substitution plays a prominent role in the foundation and implementation of mathematics and computation.   
In the $\la$-calculus, we cannot define $\alpha$-congruence without a form of substitution but for substitution and reduction to work, we need to assume a form of $\alpha$-congruence (e.g., when we take $\la$-terms modulo bound variables). Students on a $\la$-calculus course usually find this confusing.    The elegant writings and research of the Curry school~\cite{curry/hindley/seldin:CLgcII,hindley/lercher/seldin:ICLg,hindley/seldin:HBCE} have settled this problem very well.  This article is an ode to the contributions of the Curry school (especially the excellent book of Hindley and Seldin~\cite{hindley/seldin:ICLC,hindley/seldin:LCCI})  on the subject of $\alpha$-congruence  and substitution.

\end{abstract}

\section{Introduction}
\label{introsec}
Jonathan Seldin completed his PhD thesis ({\em Studies in Illative Combinatory Logic}~\cite{seldin:SICL}) in 1968 under the supervision of Haskell Curry at the University of Amsterdam, and to this day, he has safeguarded all the manuscripts of Curry as well as his legacy.  This article is dedicated to Jonathan for the great role he has played throughout the evolution of combinatory logic and the $\la$-calculus. 

The first half of the twentieth century  had seen an explosion of new novel ideas that would shape the foundations of mathematics and would lead to the birth of computability.  During those impressive times, new formal languages, logics, and tools  were invented that up to this day still represent the undisputed standards for formalisation and computation.  These standards include   the Turing machine, the $\lambda$-calculus, category theory, combinatory logic and G\"odel's incompleteness.  Although invented by different people, they are all interconnected and each sheds light on  the {\em Entscheidungsproblem} in its own way.  Alonzo Church gave us the $\lambda$-calculus which is the language of the computable ($f$ is computable iff $f$ is $\lambda$-definable), Haskell Curry gave us Combinatory Logic which is another computation model that is equivalent to the $\lambda$-claculus.  Curry also gave us a number of
concepts that we continue to heavily use today (e.g., the  Curry-Howard isomorphism which gives a correspondence between proofs in proof systems and terms in models of computation,  and the Currying notion where a function of many arguments can be evaluated as a sequence of unary-functions, as well as  the \emph{functionality} concept, which became the basis of what nowadays is called type assignment).  Turing gave us another equivalent model of computation: the Turing machine
which is the machine of the computable ($f : \mathbb{N} \mapsto \mathbb{N}$ is computable iff there is a Turing machine $M_f$ that takes an input $n$ and halts with output $f(n)$).\footnote{Samuel Eilenberg and Saunders Mac Lane gave us category theory which is an elegant foundation of mathematics. And there was the new developments of set theory and type theory which appropriately claim their crucial role in the foundations of mathematics and computation.}    In this paper, we only focus on the $\la$-calculus and on some of the lessons from the Curry school on subtitution.

Substitution plays a prominent role in the foundation and implementation of mathematics and computation.
  Bertrand Russell and others made a number of attempts at defining substitution but most, if not all, of the attempts to correctly define substitution with bound variables before the publication in 1934 of Hilbert and Bernays foundations of mathematics~\cite{hilbert/bernays:GLMt.I} were erroneous.  
Although Russell's  type free substitutional theory of 1905~\cite{russelldenoting} enabled him to prove the axiom of infinity, it  also allowed contradictions and  he was frustrated by the paradoxes which as he explained in a letter to his friend the mathematician Ralph Hawtrey "pilled the substitution theory" (see  \cite{urquhart}).  The notion of substitution remained unsatisfactorily explained in his Principia Mathematica.

This problem of substitution was the motivation of Curry\footnote{For Curry's PhD thesis (which was originally written in German) translated into English, see~\cite{kamareddine/seldin2}.}  to develop combinatory logic. In 1922, as a graduate student, Curry noticed in  the first chapter of Principia  \cite{whitehead/russell:PMth} that the rule of substitution of well-formed formulas for propositional variables (which does not even involve the well known complication of bound variables) was considerably more tricky than the rule of detachment (which is equivalent to modus ponens).   The complication Curry noticed in the rule of substitution in Chapter 1 of Principia is now considered to be the complication of its implementation by a computer program although there were no electronic computers when Curry noticed this.\footnote{It is precisely when  Curry started looking for simpler forms of substitution~\cite{curry29, curry52}, that he introduced some of the combinators and  defined in~\cite{curry:AVSC},  what is nowadays called the \emph{bracket abstract} 
$
[x_1, x_2, \cdots , x_n]{\frak{X}}.  
$
Curry was able
to prove that for any term $\frak{X}$ where free variables
 $x_1, x_2, \cdots , x_n$
may appear, there is a term $X$  in which none of these
variables appear with the property that $Xx_1x_2\cdots x_n = \frak{X} .$  This term $X$ is unique by extensionality and is denoted by $[x_1, x_2, \cdots , x_n]{\frak{X}}$.
}

This article discusses  substitution in the $\lambda$-calculus and the role of the Curry school.
One of the first hurdles after introducing students to the extremely simple syntax of the type-free pure $\lambda$-calculus (variables, abstraction and application) is how/when to introduce the substitution rule and  $\alpha$-congruence. In particular, should one use $\alpha$-congruence (yes they should because  $\alpha$-congruence is needed for the Church-Rosser property of $\beta$-reduction\footnote{See Examples A1.1 and A1.2 of \cite{hindley/seldin:LCCI}.}), and then, in which order should one introduce $\alpha$-conversion  and substitution? We cannot define $\alpha$-congruence without a form of substitution but for substitution and reduction to work, we need to assume a form of $\alpha$-congruence (e.g., when we take $\la$-terms modulo bound variables). Students on a $\la$-calculus course usually find this confusing. Moreover, to them, this does not fit with  a purely syntactical account where much of the work should be carried out using rules that can be automated.  This was behind most research on expliciting the notion of substitution in the $\lambda$-calculus in order to bridge theory  and implementation.

The more the students appreciate  the enormous intellectual debates triggered by the paradoxes and the labour that led to the birth of computers and the  foundations and limits of computation,\footnote{Ranging from    Frege's abstraction principle, his general concept of a function and his formalisations~\cite{begrif,grund}, to Russell's paradox in Frege's work and his monumental Principia Mathematica \cite{whitehead/russell:PMth}, to Hilbert's   Entscheidungsproblem and the languages/models of Curry, Turing, Church and others.} the more they question representations that might confuse them.  
        Students prefer rules that they can apply or even implement to being told to work things in their head.
They get confusd if they are told at the start  that $\la$-terms will be taken modulo the name of their bound variables  and then shown how to build $\alpha$-reduction using substitution and to give substitution modulo $\alpha$-equivalence classes. They ask which is first ($\alpha$ or substitution). For them this is  bothersome because they feel that  the dependence between $\alpha$-reduction and substitution is self-referential.

The elegant writings and research of the Curry school~\cite{curry/hindley/seldin:CLgcII,hindley/lercher/seldin:ICLg,hindley/seldin:HBCE} have settled this problem very well.  This article is an ode to the contributions of the Curry school (especially the excellent book of Hindley and Seldin~\cite{hindley/seldin:ICLC,hindley/seldin:LCCI})  on the subject of $\alpha$-reduction
and substitution.

\medskip
In Section~\ref{secsynsemnot} we introduce the basic syntax (with the purely syntactic identity $=_{\cal M}$), notational conventions and denotational semantics of the $\la$-calculus.  In Section~\ref{compsec} we move to computation and calculation 
with $\la$-terms  and explain the problem of 
the grafting $A\{v:=B\}$.\footnote{Which is the purely syntactic replacement/substitution of a variable $v$ inside a term $A$  by a term $B$.} Then we introduce bound and free occurrences of variables as well as the replacement  $A\langle\langle v:=B\rangle\rangle$ based on an ordered variable list.\footnote{Which replaces/substitutes a variable $v$ inside a term $A$  by a term $B$ using an ordered variable list to avoid the capture of free variables.} We note that although calculating (which we called $\overline{\be}$-reducing) $\la$-terms using the substitution based on an ordered variable list avoids the problem of variable capture, it does not work on its own because
the order in which a term is computed will affect the answer.  In Section~\ref{syntidenrevi} we  discuss two alternatives to identify terms modulo bound variables.  One approach ($=_{\alpha}$) builds $\alpha$-congruence directly from  the replacement based on ordered variables while the other 
is based on safe applications of grafting to build a notion $=_{\alpha'}$ of term equivalence.
Interestingly, both approaches give the same notion of syntactic equivalence denoted by $\equiv$ where terms that only differ in the name of their bound variables are equivalent.  In Section~\ref{secbetaalpha} we introduce the $\beal$ and $\beall$ reduction relations.
The basic ($\beal$) rule is the same as (${\overline{\beta}}$) but  the reflexive transitive closure $\rrightarrow_{\beal}$ adds $\alpha$-reduction in the sense  that if $A \rrightarrow_{\beal} B$ then there is $C$  such that $A\rrightarrow_{\overline{\beta}} C\rrightarrow_{\alpha} B$.  The basic ($\beall$) adds $\alpha'$-reduction to transform the redex into a so-called clean term before any beta reduction takes place and then safely uses grafting since this does not cause any problems with clean terms. Not only the basic  ($\beall$) rule is based on $\alpha'$-reduction, but also the reflexive transitive closure $\rrightarrow_{\beall}$.  So, for grafting to work, we use   $\alpha'$-reduction at the basic reduction stage (to clean the term) and at the reflexive transitive stage.
Unlike $\overline{\be}$-reduction, both $\beal$ and $\beall$ reductions now satisfy CR.  However, there is an oddity about $\beall$ compared to $\beal$.  ($\beal$) is a function in the sense that, if $(\la v.A)B\rightarrow_{\beal} C$ then $C$ is unique because the replacement relation for $\beal$ is based on an ordered variable list.\footnote{Note that although ($\beal$) is a function, $\rightarrow_{\beal}$ is not unless we take it as a function of 2 arguments, the term and the particular redex-occurrence.  For example, $(\la x.\underline{(\la y.y)x})u \rightarrow_{\beal} (\la x.x)u$ and  $\underline{(\la x.(\la y.y)x)u} \rightarrow_{\beal} (\la y.y)u$ but  $(\la x.x)u\not=_{\cal M} (\la y.y)u$.}     On the other hand, ($\beall$) is not a function  since
($\beall$) uses $\alpha'$-reduction to clean terms.   For example,  we can clean
  $(\la x.x(\la x.x))x$ into either $(\la z.z(\la y.y))x$ or $(\la y.y(\la z.z))x$ and so we get 
$(\la x.x(\la x.x))x  \rightarrow_{\beall} x(\la y.y)$ and $(\la x.x(\la x.x))x  \rightarrow_{\beall} x(\la z.z)$ yet  $x(\la y.y)\not=_{\cal M} x(\la z.z)$. But since  both $\beal$ and $\beall$ reductions have used $\alpha/\alpha'$ reduction and since we have built and understood the priorities and order of substitution and equivalence classes modulo the names of bound variables ($\alpha/\alpha'$), we can now move to the stage at which courses on the $\la$-calculus usually start and take terms modulo the names of bound variables.   So in the rest of Section~\ref{secbetaalpha}, we introduce the usual notion of substitution and $\be$-reduction and give a summary and comparison of all the $\be$-reductions introduced so far.  For the sake of completeness, in Section~\ref{secdeb} we present the $\la$-calculus with de Bruijn indices where the de Bruijn indices incorporate the equivalence classes of terms.   We look at the substitution and reduction rules with de Bruijn indices which will lead us naturally to a calculus with explicit substitutions where we take the classical $\la$-calculus with de Bruijn indices exactly as it is, but simply turn its meta-updating and meta-substitution to the object level to obtain a calculus of explicit substitutions. Extending $\la$-calculi with explicit substitutions is essential for the implementations of these calculi.  
We conclude in Section~\ref{secconc}.

\section{The Syntax and Denotational Semantics of the $\la$-calculus}
\label{secsynsemnot}
\begin{convention}
  \label{convsymb}
  If $\mbox{Symb}$ ranges over a set of entities ${\cal A}$ then also $\mbox{Symb}'$,  $\mbox{Symb}''$,  $\mbox{Symb}_1$,  $\mbox{Symb}_2$, etc., 
  range over ${\cal A}$.
  \end{convention}
\begin{definition}[$\lambda$-terms in ${\cal M}$]
  \label{lambdatermsdefs}
   \begin{itemize}
\item Let ${\cal V} = \{x,y,z, x',y',z', x_1,y_1,\break z_1, \dots\}$ be
  an infinite set of \em{term variables} and
    let \em{meta variables} $u, v$ range over $\cal V$.
    By  Convention~\ref{convsymb}, also $u'$, $v_1$, etc., range over $\cal V$.\\
    Note that   ${\cal V}$ is a set and hence all its elements are pairwise distinct.
    \item
The set of classical $\lambda$-terms (or $\la$-expressions) $\cM$ is given by:
 \begin{center}
$\cM  \: ::=  \:  {\cal V} \:|\: ( \la{{\cal V}}.{\cM}) \:|\: ( \cM \cM)$.
 \end{center}
 We let capital letters $ \ate,\bte,\cte, D,  \cdots$ range over  $\cM$.
    \item
The set of contexts with one hole $\mycC$ is given by:
 \begin{center}
$\mycC  \: ::=  \:  [\:] \:|\: ( \la{{\cal V}}.{\mycC}) \:|\: ( \mycC \cM)\:|\: ( \cM \mycC)$.
 \end{center}
 We call $[\:]$ a hole and  let  $ C[\:],  C'[\:],  C_1[\:],   \cdots$ range over  $\mycC$.\\
 We write $C[A]$ to denote the filling of the hole in $C[\:]$ with 
 term $A$.\\
 For example, if $C[\:]$ is $(\la x.([\:]x))$ then  $C[x]$ is $(\la x.(xx))$,
 $C[y]$ is $(\la x.(yx))$, and $C[(\la y.y)]$ is $(\la x.((\la y.y)x))$. 
\item
  The {\em length} of a term $A$ (written $\#A$) is defined inductively as follows:\\
  $\#v = 1\hspace{0.4in} $ $\#(AB) = \#A + \#B\hspace{0.4in}$ $\#(\la v.A) = 1+\#A$.
 \end{itemize}
\end{definition}

\begin{definition}[Compatibility]
\label{defcompatibility}
We say that a relation\footnote{Note that if $R$ is a compatible relation and $A R B$, then $C[A]RC[B]$ for any context $C[\:]$.} $R$ on $\la$-terms is {\em compatible} if the following hold for any $\la$-terms  $A, B$ and $C$ and for any variable $v$:
$$\disp\frac{A R B}{(AC) R (BC)} \hspace{0.5in}
\disp\frac{A R B}{(CA) R (CB)}  \hspace{0.5in}
\disp\frac{A R B}{(\la v.A) R (\la v.B)}$$
\end{definition}
Here,  $\disp\frac{\mbox{above}}{\mbox{below}}$
means that  if ``above'' holds then ``below'' holds too.

\begin{definition}[Strict Equality $=_{\cal M}$]
  \label{defequalitys1}
  Strict equaliy $=_{\cal M}$ on the set ${\cal M}$ of $\la$-terms is defined as  the reflexive, transitive, symmetric and compatible closure under
  $$\disp \frac{\mbox{$v \in {\cal V}$}}{v =_{}v}\hspace{0.1in}\hfill{(\mbox{Base $=_{\cal M}$})}$$
  Note that Strict Equality $=_{\cal M}$ is $\{(A,A) | A\in {\cal M}\}$.
\end{definition}

To explain the meaning of terms, let us imagine
a model ${\cal D}$\index{${\cal D}$}  where every $\lambda$-term denotes an element of that
model.  We let $\textbf{d}$ range over  ${\cal D}$ (and hence by  Convention~\ref{convsymb}, $\textbf{d}'$, $\textbf{d}''$, $\textbf{d}_1$, $\textbf{d}_2$ etc., also range over  ${\cal D}$).

The meaning of terms depends very much on the values we assign to variables.  This is why we need the so-called  {\em environment}.

\begin{definition}[$\texttt{ENV}$, Changing Environments]
We define the set of {\em environments}\index{environment}  $\texttt{ENV}$\index{$\texttt{ENV}$} as the set of total functions from  ${\cal V}$ to  ${\cal D}$.  We let $\sigma$ range over $\texttt{ENV}$.

$$\texttt{ENV} =  {\cal V}  \mytfunsym {\cal D}.$$

We define the new environment $\sigma[\textbf{d}/v]\in {\cal V} \mytfunsym {\cal D}$ as follows:
$$\sigma[\textbf{d}/v](v')=\begin{cases}
\textbf{d} & \mbox{ if $v' = v$}\\
\sigma(v') & \mbox{otherwise}
\end{cases}
$$
\end{definition}
Note that 
      $\sigma[\textbf{d}
           /v][\textbf{d}'
            /v](v) = \textbf{d}'
      $, 
          $\sigma[\textbf{d}
      /v][\textbf{d}'
      /v] = \sigma[\textbf{d}'
      /v]$ and 
             if $v\not = v'$ then $\sigma[\textbf{d}
                /v][\textbf{d}'
                /v']= \sigma[\textbf{d}'
                /v'][\textbf{d}
                /v]$.
 
\begin{definition}[Denotational meaning of terms]\index{meaning ! term}
  \label{meaningfuncdef}\hfill\mbox{}
  We define the following:\footnote{Note that we defined our environments to be total functions on ${\cal V}$ ( $\texttt{ENV} =  {\cal V}  \mytfunsym {\cal D}$) and we defined our meaning function $\llbracket \: \rrbracket \in {\cal M}\mytfunsym(\texttt{ENV} \mytfunsym {\cal D}) $
to  be a total function on ${\cal M}$  and hence, for any $\sigma\in \texttt{ENV}$ and $A\in{\cal M}$, 
$\den{A}_\sigma$ is defined and is an element of ${\cal D}$.  We also see that when $A$ is an abstraction $(\la v.B)$, then  $\den{(\la v.B)}_\sigma$ is a partial function in ${\cal D} \mypfun {\cal D}$. }
  \begin{itemize}
    \item
      The function  $\llbracket \: \rrbracket \in {\cal M}\mytfunsym(\texttt{ENV} \mytfunsym {\cal D}) $
 as follows:\index{$\den{}$}
 \begin{itemize}
   \item
$\den{v}_\sigma = \sigma(v)$.
   \item
$\den{(AB)}_\sigma = \den{A}_\sigma(\den{B}_\sigma)$.
   \item
     $\den{(\la v. A)}_\sigma = \textbf{f}\in {\cal D} \mypfun {\cal D}$
     such that $\forall \textbf{d} \in {\cal D}$,
     $\textbf{f}(\textbf{d}) = \den{A}_{\sigma[\textbf{d}/v]}$.
     \end{itemize}
\item
  $A$ and $B$ {\em have the same meaning in environment $\sigma$} iff $\den{A}_\sigma = \den{B}_\sigma$.
  \item
    $\den{A} = \den{B}$
iff $\den{A}_\sigma = \den{B}_\sigma$  for every   environment $\sigma$.
\end{itemize}
\end{definition}

We
 will assume the usual parenthesis convention where 
application associates to the left, 
outer parenthesis 
can be dropped, 
the body of a $\lambda$ includes everything that comes after it and a 
sequence of $\la$'s is compressed to one.  So, 
 $ABCD$ denotes $(((AB)C)D)$ and 
  $\la vv'.AB$ denotes
$\la v.(\la v'.(AB))$.

 \section{Computing $\la$-terms}
  \label{compsec}
 \begin{center}
   \begin{tcolorbox}[width=.9\textwidth, colframe=blue]
     The abstraction term $(\lambda v. A)$ alone is a bachelor waiting for its partner.  As soon as a $B$ stands to the right of the abstraction $(\lambda v. A)$ we get a partnered couple  $(\lambda v. A)B$.  When the time comes and we are ready for computation, $(\lambda v. A)B$ can produce a child ${\ate}_{v:=\bte}$ which is the body $A$ of the abstraction in which all the occurrences of $v$ are replaced by $B$ as follows:
\begin{equation}
(\la v.\ate)\bte \mbox{ computes to  } {\ate}_{v:=\bte}\tag{1}\label{eq:4.1}
\end{equation}
This replacement ${\ate}_{v:=\bte}$ still needs to be defined.
          \end{tcolorbox}
 \end{center}

 \subsection{Defining ${\ate}_{v:=\bte}$ as Grafting $A\{v:=B\}$ does not work}
\label{graftnogoodsection}
One might immediately think of defining  ${\ate}_{v:=\bte}$ by a  strict syntactic replacement (a.k.a.\ grafting) as given in Definition~\ref{graftingdef} below, however, this is the wrong definition as we will see in Table~\ref{graftingnogood}.

\begin{definition}[Grafting $A\{v:=B\}$]\index{grafting}\index{$A\{v:=B\}$}
\label{newgraftingdef}
\label{graftingdef}
For any $A, B, v$, we define the {\em grafting} relation $A\{v:=B\}$\index{$A\{v:=B\}$} to be the result of syntactically replacing $B$ for every occurrence of $v$ in $A$, as follows:\footnote{We use $=_{def}$ here instead of $_{\cal M}$ to draw attention that this definition will not work as we will see in the rest of this paper. }
\[
\begin{array}{llll}
1. \: v\{v:=B\} & =_{def}& B  \\
2. \:  v'\{v:=B\} & =_{def} & v'  \hspace{0.5in} \mbox{if $v \not =_{\cal M} v'$}\\
3. \: (A C)\{v:=B\} & =_{def} & A\{v:=B\} C\{v:=B\}  \\
4. \: (\la{v}.{A})\{v:=B\} & =_{def} & \la{v}.{A}  \\
5. \: (\la{v'}.{A})\{v:=B\} & =_{def} & \la{v'}.{A\{v:=B\}}    \hspace{0.5in} \mbox{if $v \not =_{\cal M} v'$}
\end{array}
\]
\end{definition}

\begin{center}
\begin{table}[h]
\begin{tabular}{|l|}
  \hline
  Let $\sigma$ be an environment.\\ Let  $\textbf{f}, \textbf{g},  \textbf{g}'$ and $\textbf{h}$ be such that 
$\forall \textbf{d}, \textbf{d}' \in {\cal D}$:
  $\textbf{f}(\textbf{d})(\textbf{d}') = \textbf{d}(\textbf{d}')$, \\$\textbf{g}(\textbf{d}) = \den{y}_\sigma(\textbf{d})$, $\textbf{g}'(\textbf{d}) = \den{x}_\sigma(\textbf{d})$ and $\textbf{h}(\textbf{d}) = \textbf{d}(\textbf{d})$. \\ Clearly  $\textbf{g}$, $\textbf{g}'$ and $\textbf{h}$ are different and  $\den{\la z.xz}_\sigma = \den{\la y.xy}_\sigma = \textbf{g}'$.  \\
  If we allow Definition~\ref{meaningfuncdef} to cover the $=_{def}$ of Grafting Definition~\ref{graftingdef}\\ (i.e., if $A\{v:=B\} =_{def} C$ then $\den{A\{v:=B\}}_\sigma = \den{C}_\sigma$), 
  then:\\
      $\bullet$     $\den{(\la z.xz)\{x:=y\}}_\sigma = \den{\la z.yz}_\sigma =\textbf{g}$\\
     $\bullet$   $\den{(\la y.xy)\{x:=y\}}_\sigma = \den{\la y.yy}_\sigma =\textbf{h}$.\\
 $\bullet$  
  So,  $\den{\la z.xz}_\sigma = \den{\la y.xy}_\sigma$ but $\den{(\la z.xz)\{x:=y\}}_\sigma \not = $\\ $\qquad \den{(\la y.xy)\{x:=y\}}_\sigma$.\\
 $\phantom \quad\hspace{0.5in}$ This is not good.\\
 $\bullet$ If  $({\la  v.A})_{v:=B}$ is the grafting $A\{v:=B\}$ of  Definition~\ref{graftingdef},\\
 \qquad  then 
  by \eqref{eq:4.1}:\\
 $\phantom\quad\phantom\quad$ 1. $(\la  xz.xz)y  \mbox{  computes to  } (\la  z.xz)\{x:=y\}=_{def}\la z.yz$.  \\ $\phantom \quad\hspace{0.5in}$ And, $\den{(\la xz.xz)y}_\sigma = \textbf{g}= \den{\la z.yz}_\sigma$.\\
  $\phantom\quad\phantom\quad$  2. $(\la  xy.xy)y  \mbox{  computes to  } (\la  y.xy)\{x:=y\}=_{def}\la y.yy$.\\  $\phantom \quad\hspace{0.5in}$ But, $\den{(\la xy.xy)y}_\sigma = \textbf{g}\not= \den{\la y.yy}_\sigma = \textbf{h}$.\\
  This is bad and so we cannot use the computation rule $({\be^w})$:\\
  $\quad\hspace{0.3in}\quad (\la v.A)B \rightarrow_{\be^w} A\{v:=B\}$\\
  because $(\la  xy.xy)y$ and $(\la  xz.xz)y$ have the same meaning $\textbf{g}$,\\
  $(\la  xy.xy)y\rightarrow_{\be^w} \la y.yy$ and $(\la  xz.xz)y\rightarrow_{\be^w}\la z.yz$\\
  but $\la z.yz$ and $\la y.yy$ have different meanings ($\textbf{g}$ resp.\ $\textbf{h}$).\\
\hline
\end{tabular}
\caption{Grafting does not work}
\label{graftingnogood}
\end{table}
\end{center}

Table~\ref{graftingnogood} gives examples where equation \eqref{eq:4.1} is used with the grafting of Definition~\ref{graftingdef} resulting in comptation rule $({\be^w})$.
As we  see,  computing $(\la  xz.xz)\circled{$y$}$  to $(\la  z.xz)\{x:=\circled{$y$}\}=_{def}\la  z.\circled{$y$}z$ is correct whereas computing
$(\la  xy.xy)\circled{$y$}$ to $(\la  y.xy)\{x:=\circled{$y$}\}=_{def}\la  y.\circled{$y$}y$ is wrong.   Since $\la$ is a binder (like $\forall$), the $\circled{$y$}$ which is free in the original term $(\la  xy.xy)\circled{$y$}$ is now bound in its computation $\la  y.\circled{$y$}y$. So $(\la  y.xy)\{x:=\circled{$y$}\}$ gives the wrong answer and we need to find a different definition of replacement which 
respects the free status of $\circled{$y$}$.

First, we define  the  notions of {\em free} and {\em bound} occurrences of variables.

\subsection{Free and bound occurrences of variables}
\label{freesecs}

\begin{definition}[Term Occurrence]\index{occurrence ! term}
  \label{repeatoccurrencedef}
  \label{occurrencedef}
  We define an {\em occurrence relation}\index{occurrence ! term}\index{subterm} between $\la$-terms as follows:
  \begin{itemize}
  \item
    $A$ occurs in $A$.
  \item
    If $A$ occurs in either $B$ or $C$ then $A$ occurs in $BC$.
  \item
    If $A$ occurs in $B$ or $A =_{\cal M} v$ then $A$ occurs in $\la v.B$.
   \end{itemize}
\end{definition}
We can number the occurrences as in 
$\overline{xy}^{\circ_1}(\lambda x.y(\lambda y.z)(\overline{xy}^{\circ_2}))$.
  When the term $A$ is a variable, we leave out the overline as in $x^{\circ_1}(\lambda x^{\circ_2}.y(\lambda y.z)\break (x^{\circ_3}z))$.

  \begin{definition}[Scope, Free/Bound Occurrences, Combinator]    \label{scopeocc}
    \label{repeatscopeocc}
    \hfill \mbox{}
  \begin{enumerate}
       \item
    For a particular occurrence of a $\la{v}.A$ in a term $C$, we call the occurrence of $A$ {\em the scope} of the occurrence of the $\la v$.
  \item
    We call an occurrence of a variable $v$ in a term $C$,
     \begin{enumerate}
     \item
       {\em bound} if it is
in the scope  of a $\la v$ in $C$.\footnote{That is, if the occurrence of $v$ is 
     inside the term $A$ of a  $\la v.A$ which occurs in  $C$.}
      \item
        {\em  bound and binding}, if it is the $v$ of a $\la v$  in  $C$.
      \item
        {\em free} if it is not bound.
     \end{enumerate}
   \item
     We say that $v$ is bound (resp.\ free) in $C$ if $v$ has at least one binding (resp.\ free) occurrence in $C$.  We write $BV(C)$ (resp.\ $FV(C)$) for the set of bound (resp.\ free) variables of $C$.
  \item
   A {\em closed expression} is an expression in which all occurrences of variables are bound.  A closed expression is also called a {\em combinator}.\index{combinator}
      \end{enumerate}
\end{definition}

  \subsection{Defining ${\ate}_{v:=\bte}$ as Replacement $A\langle\langle v:=B\rangle\rangle$  Using Ordered Variables}
\label{revisitgraft}
In order to avoid the problem of grafting as we saw in Table~\ref{graftingnogood}, replacement needs to be handled with care due to the distinct roles played 
by bound and free occurrences of variables.
Since ${(\la{y}.{x y})}_{x:=\circled{$y$}}$ must not return $\la{y}.{\circled{$y$} y}$ and we should not change the free $\circled{$y$}$, then we should change  the name of the bound variable $y$ in $\la{y}.{x y}$ to $v\not\in\{x,y\}$ obtaining  $\la v.xv$.  Then,  replacing  the $x$  of  $\la v.xv$ by $\circled{$y$}$ gives $\la v.\circled{$y$}v$ which is fine because   $v\not\in\{x,y\}$.
   Clearly we need to change the following rule 5.\ of  Definition~\ref{graftingdef}:
   $$5.\ (\la{v'}.{A})\{v:=B\}  =_{def}  \la{v'}.{A\{v:=B\}}    \hspace{0.5in} \mbox{if $v \not =_{\cal M} v'$}.$$\
   We could split 5.\ into 2 rules as follows:
   \[
   \begin{array}{lll}
5. \: (\la{v'}.{A})\{ v:=B\} & =_{def} & \la{v'}.{A\{ v:=B\}}    \hspace{0.5in} \mbox{if $v \not =_{\cal M} v'$}\\
 & & \mbox{and ($v' \not \in FV(B)$ or $v \not \in FV(A)$)}\\
6. \: (\la{v'}.{A})\{ v:=B\} & =??? & \la{v''}.{A\{ v':=v''}\}\{ v:=B\}  \hspace{0.3in} \mbox{if $v \not =_{\cal M} v'$} \\
 & & \mbox{and ($v' \in FV(B)$ and  $v \in FV(A)$)} \\
& &  \mbox{and $v'' \not \in FV(AB)$}
\end{array}
\]

We still need to precise the $v''$ of rule 6.  Clearly if we stick to the strict equality of Definition~\ref{defequalitys1}, then $v''$ must be unique since if $\la{v''_1}.yv''_1 =_{\cal M}\la{v''_2}.yv''_2$  then $v''_1 =_{\cal M} v''_2$ follows from the following consequence of  Definition~\ref{defequalitys1}:
\begin{equation}
  (\lambda v.A)  =_{\cal M}  (\lambda v'.A') \mbox{ iff 
    ($v =_{\cal M} v'$ and $A  =_{\cal M}  A'$)}\tag{2}
  \label{eqsynt}
\end{equation}

One way to get a unique result in rule 6.\  would be to order the list of variables ${\cal V}$ and
  then to take $v''$ to be the first variable in the ordered list
  ${\cal V}$ which is different from $v$ and $v'$ and which occurs  after all the free variables of $AB$.  To do this, we add to clause 6  the condition that $v''$ is the first variable in the ordered list ${\cal V}$ which satisfies the conditions of clause 6. That is, we replace  Definition~\ref{graftingdef} by:
  \begin{definition}[Replacement  \mbox{$A \langle \langle v:=B\rangle\rangle$} using ordered variables]
\label{orderedsubstdef}
We define $A{\langle \langle v:=B\rangle\rangle }$ to be the result of replacing $B$ for every free occurrence of $v$ in $A$:
\[
\begin{array}{llll}
1. \: v{\langle\langle v:=B\rangle\rangle} & =_{\cal M} & B  \\
2. \:  v'{\langle\langle v:=B\rangle\rangle} & =_{\cal M} & v'  \hspace{0.5in} \mbox{if $v \not =_{\cal M} v'$}\\
3. \: (A C){\langle\langle v:=B\rangle\rangle} & =_{\cal M} & A{\langle \langle v:=B\rangle\rangle} C{\langle\langle v:=B\rangle\rangle}  \\
4. \: (\la{v}.{A}){\langle\langle v:=B\rangle\rangle} & =_{\cal M} & \la{v}.{A}  \\
5. \: (\la{v'}.{A}){\langle\langle v:=B\rangle\rangle} & =_{\cal M} & \la{v'}.{A{\langle\langle v:=B\rangle\rangle}}    \hspace{0.5in} \mbox{if $v \not =_{\cal M} v'$}\\
 & & \mbox{and ($v' \not \in FV(B)$ or $v \not \in FV(A)$)}\\
6. \: (\la{v'}.{A}){\langle\langle v:=B\rangle\rangle} & =_{\cal M} & \la{v''}.{A{\langle\langle v':=v''\rangle\rangle}}{\langle\langle v:=B\rangle\rangle} \\ &&
\mbox{if $v \not =_{\cal M} v'$} \\
 & & \mbox{and ($v' \in FV(B)$ and  $v \in FV(A)$)} \\
& &  \mbox{and $v''$ is the first variable in the ordered} \\
  & & \mbox{variable list ${\cal V}$ such that $v''\not\in FV(AB)$.}
\end{array}
\]
\end{definition}

  \label{orderedlist}
For example,
  if the ascending order in ${\cal V}$ is
\[x, y, z, x', y', z', x'', y'', z'', \dots\]
then $z$  the first variable in this list which is different from $x$ and $y$ and which is not free  in either $y$ or $xy$.  So, 
$(\la {y}.xy)\langle\langle x:=y\rangle\rangle $ can only be  $(\la {z}.yz)$.

The next lemma plays some initial steps of comparing grafting with replacement using ordered variables.  Note especially item~\ref{orderedmeta-sublema4} which shows that under some strict conditions, grafting is replacement.
\begin{lemma}
  \label{orderedmeta-sublema}
  \hfill \mbox{}
  \begin{enumerate}
  \item
  \label{orderedlemmasubstnofreevars}
  If $v\not\in FV(A)$ then for any $B$, $A\langle\langle v:=B\rangle\rangle =_{\cal M} A$.\footnote{Also, if $v\not\in FV(A)$  then $A\{v:=B\} =_{\cal M} A$.}
\item
  \label{orderedlemmafreevarsofasubst}
  If $v\in FV(A)$ then $FV(A\langle\langle v:=B\rangle\rangle) = (FV(A)\setminus \{v\})\cup FV(B)$.\footnote{This does not hold for $A\{v:=B\}$.  E.g.,
    $(FV(\la x.y)\setminus \{y\})\cup\{x\} = \{x\}\not = FV((\la x.y)\{y:=x\}) = FV(\la x.x)=\emptyset$.  However,  $FV(A\{v:=B\}) \subseteq (FV(A)\setminus \{v\})\cup FV(B)$.}
  \item
    \label{orderedmeta-sublema4}
    If   $v'\not\in FV(vA)$, $v,v'\not\in BV(A)$
    then $A\{v:=v'\}  =_{\cal M} A\langle\langle v:=v'\rangle\rangle$.
  \item
  \label{orderedmeta-sublema3}
  If $v\not = v'$, $v\not\in FV(C)$ and whenever $\la v''.D$ occurs in $A$ then $v''\not\in FV(BC)$ (i.e., no bound variable of $A$ occurs free in $BC$), then $A\langle\langle v:=B\rangle\rangle \langle\langle v':=C\rangle\rangle  =_{\cal M} A\langle\langle v':=C\rangle\rangle \langle\langle v:=B\langle\langle v':=C\rangle\rangle \rangle\rangle $.
  \end{enumerate}
\end{lemma}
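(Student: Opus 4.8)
The plan is to establish all four items by induction on $\#A$, with the earlier items feeding the later ones: items~\ref{orderedlemmasubstnofreevars} and~\ref{orderedlemmafreevarsofasubst} record how free variables interact with $\langle\langle v:=B\rangle\rangle$ and are invoked repeatedly inside the proofs of items~\ref{orderedmeta-sublema4} and~\ref{orderedmeta-sublema3}. The one structural fact that drives everything is that the renaming clause~6 of Definition~\ref{orderedsubstdef} fires on $(\la v'.A)\langle\langle v:=B\rangle\rangle$ only when both $v'\in FV(B)$ and $v\in FV(A)$; so as soon as we can rule out $v'\in FV(B)$ (or $v\in FV(A)$) the replacement descends under $\la v'$ via clause~5 without touching the bound variable, and the induction hypothesis then applies to a strictly shorter term.

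For item~\ref{orderedlemmasubstnofreevars} the variable cases are clauses~1--2, the application case is clause~3 with the induction hypothesis on both subterms, and in $A=\la w.A'$ the hypothesis $v\notin FV(A)$ forces either $w=v$ (clause~4) or $v\notin FV(A')$ (clause~6 is then excluded, so use clause~5 and the induction hypothesis). Item~\ref{orderedlemmafreevarsofasubst} follows the same skeleton: in the application case one splits on which subterms actually contain $v$ free, using item~\ref{orderedlemmasubstnofreevars} for those that do not; in the abstraction case $A=\la v'.A'$ with $v\in FV(A')$, if $v'\notin FV(B)$ then clause~5 and the induction hypothesis finish, while if $v'\in FV(B)$ then clause~6 applies and one computes $FV(A'\langle\langle v':=v''\rangle\rangle)$ from items~\ref{orderedlemmasubstnofreevars}--\ref{orderedlemmafreevarsofasubst} on the shorter term $A'$ (a variable-for-variable replacement preserves $\#$), then uses item~\ref{orderedlemmafreevarsofasubst} once more on $A'\langle\langle v':=v''\rangle\rangle$; the freshness clause $v''\notin FV(A'B)$ makes the resulting set arithmetic collapse to the claimed identity. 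Item~\ref{orderedmeta-sublema4} is the same induction once more: the hypotheses $v,v'\notin BV(A)$ and $v'\notin FV(vA)$ ensure that for every $\la w.A'$ occurring in $A$ we have $w\neq v$ and $w\neq v'$, so replacement's clause~6 (which here would need $w\in FV(v')$) is never enabled, grafting's clause~5 and replacement's clause~5 perform identical steps, and the base and application cases are immediate from the matching clauses~1--3 of both definitions.

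The substantive item is~\ref{orderedmeta-sublema3}, the substitution lemma, proved by induction on $\#A$. The variable cases are short: for $A=v$ both sides equal $B\langle\langle v':=C\rangle\rangle$; for $A=v'$ the left side is $C$ while the right side is $C\langle\langle v:=B\langle\langle v':=C\rangle\rangle\rangle\rangle$, equal to $C$ by item~\ref{orderedlemmasubstnofreevars} since $v\notin FV(C)$; for $A=v''$ distinct from $v,v'$ both sides are $v''$. The application case is clause~3 applied four times together with the induction hypothesis on each subterm. The heart of the matter is $A=\la w.A'$: the bound-variable hypothesis gives $w\notin FV(BC)$, hence $w\notin FV(B)$, $w\notin FV(C)$, and — bounding $FV(B\langle\langle v':=C\rangle\rangle)\subseteq FV(B)\cup FV(C)$ by items~\ref{orderedlemmasubstnofreevars}--\ref{orderedlemmafreevarsofasubst} — also $w\notin FV(B\langle\langle v':=C\rangle\rangle)$. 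Therefore clause~6 is excluded for each of the four outer replacements acting on this abstraction, and every one of them goes through by clause~5 (and, when $w\in\{v,v'\}$, by clause~4 together with item~\ref{orderedlemmasubstnofreevars} to discard the replacements that act vacuously). When $w\notin\{v,v'\}$ both sides become $\la w.(\cdots)$ with the two bodies related exactly by the induction hypothesis on $A'$, which is shorter and inherits the bound-variable hypothesis, and compatibility of $=_{\cal M}$ (Definition~\ref{defcompatibility}) closes the case. I expect the only real difficulty to be bookkeeping: in each subcase of the abstraction clause one must check that the side-conditions of clauses~5 and~6 of Definition~\ref{orderedsubstdef} evaluate as anticipated, which is precisely what items~\ref{orderedlemmasubstnofreevars} and~\ref{orderedlemmafreevarsofasubst} are for; there is no deeper obstacle than disciplined case analysis.
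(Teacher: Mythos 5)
Your proof is correct and follows essentially the same route as the paper, which proves items 1--2 by induction on the derivation of the replacement, item 3 by induction on the derivation of the grafting, and item 4 by induction on $A$, with the earlier items feeding the later ones exactly as you describe. The paper gives only that one-line sketch; your write-up supplies the compatible case analysis, including the two observations that make it go through (variable-for-variable replacement preserves length, so the nested call in clause 6 is still covered by the induction, and the bound-variable hypothesis disables clause 6 throughout item 4).
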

\begin{proof}
  1.\ and 2.\ are by induction on the derivation $A\langle\langle v:=B\rangle\rangle$.  3.\ is by induction on the derivation $A\{v:=v'\} =_{def} C$.  4.\ is by induction on $A$.
  \end{proof}
With this lemma, we are starting to see some of the complications of having to take the first relevant variable in the ordered variable list in the definition of $A\langle\langle v:=B\rangle\rangle$.  In fact, without the condition ``whenever $\la v''.D$ occurs in $A$ then $v''\not\in FV(BC)$'', we would not be able to prove Lemma~\ref{orderedmeta-sublema}.\ref{orderedmeta-sublema3}. This defeats the purpose of having an ordered list of variables.

However, there is a more substantial reason as to why an ordered list of variables on its own will not work (see Example~\ref{ecampledoesnotwork} below).  As you recall, we are trying to define the replacement ${\ate}_{v:=\bte}$ in order to define the computation of equation (\ref{eq:4.1}).   If we use the replacement Definition~\ref{orderedsubstdef} to define computation, we get this definition:

 \begin{definition}  \label{overlinebetared}
We define $\rightarrow_{\overline{\beta}}$ as the least compatible relation closed under:
\[ ({\overline{\beta}}) \hspace{0.5in} (\la{v}.A)B \rightarrow_{\overline{\beta}} A\langle\langle v:=B\rangle\rangle \hspace{0.5in} \]
We call this reduction relation $\overline{\beta}$-reduction.
We define $\rrightarrow_{\overline{\beta}}$  as the reflexive transitive closure of $\rightarrow_{\overline{\beta}}$.
\end{definition}

 With this definition  we would lose the so-called Church-Rosser (CR) Property which is defined for a relation $R$ as follows:
 \begin{definition}\label{CRdef}
     We say that a relation $R$ on ${\cal M}$  enjoys the CR property if whenever  $A R B$ and $A R C$ then there is $D$ such that $B R D$ and $C RD$.  
   \end{definition}

$\rrightarrow_{\overline{\be}} $ does not enjoy the CR property.  This can be seen as follows (example is taken from \cite{hindley/seldin:LCCI}):
 \begin{example}
   \label{ecampledoesnotwork}
   $(\la xy.yx)((\la z.x')y)\rrightarrow_{\overline{\be}} \la y.yx'$ and
   $(\la xy.yx)((\la z.x')y)\rrightarrow_{\overline{\be}} \la y'.y'x'$ (assuming the ordered variable list given on page~\pageref{orderedlist}).  It is clear that
   $\la y.yx'\not=_{\cal M}\la y'.y'x'$ and there is no $D$ such that $\la y.yx'\rrightarrow_{\overline{\be}}D$ and $\la y'.y'x'\rrightarrow_{\overline{\be}}D$.  Note that $\den{\la y.yx'}=\den{\la y'.y'x'}$.
        \end{example}

\section{Syntactic identity revised, Searching for $\equiv$}
\label{syntidenrevi}
In the previous section we discussed computation using either 
grafting 
($\be^w$ which does not work) or replacement based on ordered variables ($\overline{\be}$ which has its complications, and moreover, will still not work as we saw in Example~\ref{ecampledoesnotwork}).  

In this section we will discuss two alternatives both of which identify terms modulo bound variables.
One approach builds the so-called $\alpha$-congruence $=_{\alpha}$ directly from  the replacement based on ordered variables given in Definition~\ref{orderedsubstdef}   (see Definition~\ref{alphared}) 
while the other is based on safe applications of the grafting of Definition~\ref{graftingdef} (see Definition~\ref{syntactiequidef}) to build a notion
$=_{\alpha'}$ of term equivalence.
Interestingly, both approaches give the same notion of syntactic equivalence denoted by $\equiv$ where terms that only differ in the name of their bound variables are equivalent.
For example:  $\la{y}.{xy}\equiv\la{z}.{xz}$, $\la{x}.{x}\equiv \la{y}.{y}$, $\la {x'}.yx'{\equiv}\la {y'}.yy'$ and   $\la {x'}.yx'{\equiv}\la {z}.yz$, etc.
Note that
$\la{x}.{xy}\not=_{\cal M} \la{z}.{xz}$, 
 $\la{x}.{x}\not=_{\cal M} \la{y}.{y}$, etc. 

\subsection{Defining  $=_\alpha$ using  the replacement based on ordered variables}
\label{alpharedsecction}

\begin{definition}  \label{alphared}
We define $\alre$\index{reduction ! alpha}\index{$\alre$} as the least compatible relation closed under:
\[ (\alpha) \hspace{0.5in} \la{v}.A \rightarrow_\alpha \la{v'}.A\langle\langle v:=v'\rangle\rangle  \hspace{0.5in} \mbox{where $v' \not \in FV(A)$}
\]
We call this reduction relation $\alpha$-reduction, $ \la{v}.A$ an $\alpha$-redex and\break $\la{v'}.A\langle\langle v:=v'\rangle\rangle$
its $\alpha$-contractum.  If $R$ is an $\alpha$-redex, we write $\Gamma_\alpha[R]$ for its $\alpha$-contractum.
We define $\rrightarrow_\alpha$ (resp.\ $=_\alpha$) as the reflexive transitive (resp.\ equivalence) closure of $\rightarrow_\alpha$.
\end{definition}

Note that $\rightarrow_\alpha$ is not symmetric.
E.g., using the variable list of page~\pageref{orderedlist}:
\begin{itemize}
  \item
    $\la xy.xy \rightarrow_\alpha \la y.(\la y.xy)\langle\langle x:=y\rangle\rangle =_{\cal M}$\\$  \la y.\la z.(xy)\langle\langle y:=z\rangle\rangle\langle\langle x:=y\rangle\rangle =_{\cal M} \la yz.yz$.
  \item
    $\la yz.yz \rightarrow_\alpha \la x.(\la z.yz)\langle\langle y:=x\rangle\rangle =_{\cal M} \la xz.xz\not =_{\cal M} \la xy.xy$.
  \item
    So, $\la xy.xy \rightarrow_\alpha \la yz.yz$ but $\la yz.yz \not \rightarrow_\alpha    \la xy.xy$.\\
    However, $\la yz.yz \rrightarrow_\alpha \la xy.xy$.
\end{itemize}
In fact, $\rrightarrow_\alpha$ is symmetric and hence $=_\alpha$ is the same relation as $\rrightarrow_\alpha$.  See Lemma~\ref{lemmaalphatranslate}.\ref{rrightarrowalphasymmetric}.

Note also that $=_\alpha$ is closed under replacement (Definition~\ref{orderedsubstdef}),  and denotational meaning (Definition~\ref{meaningfuncdef}) and we can now remove the bound variable conditions in
Lemma~\ref{orderedmeta-sublema}.\ref{orderedmeta-sublema3} as long as we use $=_{\alpha}$ instead of 
$=_{\cal M}$.
\begin{lemma}
  \label{eqalphalemmaden}
  \label{lemmaalphapreservesfreevars}
  \begin{enumerate}
  \item
    \label{lemmaalphapreservesfreevars0}
    If $A=_\alpha B$ then $FV(A)=FV(B)$ and  $\den{A} = \den{B}$.
    \item
    \label{lemmaalphapreservesfreevars1}  
  If $v\not = v'$, $v'\not\in FV(A)$ then $A\langle\langle v:=v'\rangle\rangle \langle\langle v':=B\rangle\rangle  =_{\alpha} A\langle\langle v:=B\rangle\rangle  $.    
\item
  \label{lemmaalphapreservesfreevars2}
      If $A=_\alpha B$ then  $C\langle\langle v:=A \rangle\rangle =_\alpha C\langle\langle v:=B \rangle\rangle$ and $A\langle\langle v:=C \rangle\rangle =_\alpha B\langle\langle v:=C \rangle\rangle$.
    \item
       \label{lemmaalphapreservesfreevars3}
      If $v\not = v'$ and $v\not\in FV(C)$, then $A\langle\langle v:=B\rangle\rangle \langle\langle v':=C\rangle\rangle  =_{\alpha} A\langle\langle v':=C\rangle\rangle \langle\langle v:=B\langle\langle v':=C\rangle\rangle \rangle\rangle $.
      \end{enumerate}
\end{lemma}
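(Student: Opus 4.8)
The plan is to prove the four items in order, since each of the later parts leans on the earlier ones. For item~\ref{lemmaalphapreservesfreevars0}, I would argue by induction on the derivation of $A =_\alpha B$. Since $=_\alpha$ is the equivalence (reflexive, symmetric, transitive, compatible) closure of $\rightarrow_\alpha$, the reflexive and symmetric cases are trivial, transitivity is immediate, and the three compatibility cases ($(AC) =_\alpha (BC)$, $(CA) =_\alpha (CB)$, $(\la v.A) =_\alpha (\la v.B)$) follow from the induction hypothesis together with the definitions of $FV$ and of $\den{\,}$ in Definition~\ref{meaningfuncdef}. The only genuine content is the base case $\la v.A \rightarrow_\alpha \la{v'}.A\langle\langle v:=v'\rangle\rangle$ with $v'\notin FV(A)$: for the free-variable claim, if $v\notin FV(A)$ then Lemma~\ref{orderedmeta-sublema}.\ref{orderedlemmasubstnofreevars} gives $A\langle\langle v:=v'\rangle\rangle =_{\cal M} A$ and $FV(\la v.A) = FV(A) = FV(\la{v'}.A)$; if $v\in FV(A)$ then Lemma~\ref{orderedmeta-sublema}.\ref{orderedlemmafreevarsofasubst} gives $FV(A\langle\langle v:=v'\rangle\rangle) = (FV(A)\setminus\{v\})\cup\{v'\}$, so removing $v'$ (which is not free in $A$) recovers $FV(A)\setminus\{v\}$. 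For the denotational claim, one computes $\den{\la{v'}.A\langle\langle v:=v'\rangle\rangle}_\sigma$ pointwise: feeding $\mathbf d$ it equals $\den{A\langle\langle v:=v'\rangle\rangle}_{\sigma[\mathbf d/v']}$, and a standard substitution lemma — $\den{A\langle\langle v:=v'\rangle\rangle}_\sigma = \den{A}_{\sigma[\sigma(v')/v]}$ — together with the fact that $v'\notin FV(A)$ so $\den{A}$ is insensitive to $\sigma(v')$, reduces this to $\den{A}_{\sigma[\mathbf d/v]} = \den{\la v.A}_\sigma(\mathbf d)$.

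For item~\ref{lemmaalphapreservesfreevars1}, $A\langle\langle v:=v'\rangle\rangle\langle\langle v':=B\rangle\rangle =_\alpha A\langle\langle v:=B\rangle\rangle$ under $v\neq v'$ and $v'\notin FV(A)$, I would proceed by induction on $A$, using $=_\alpha$ (not $=_{\cal M}$) throughout so that the troublesome renaming clause~6 of Definition~\ref{orderedsubstdef} can be absorbed. The variable cases and the application case are routine (in the $v$ case, $v\langle\langle v:=v'\rangle\rangle\langle\langle v':=B\rangle\rangle = v'\langle\langle v':=B\rangle\rangle = B = v\langle\langle v:=B\rangle\rangle$; in a $w$ case with $w\neq v$, note $w\neq v'$ is forced by $v'\notin FV(A)$). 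The abstraction case $\la w.A_0$ splits according to whether clause~4, 5, or 6 fires at each step, and here the point is that since we only need equality up to $=_\alpha$, a bound-variable rename introduced by clause~6 on one side can be matched by an $\alpha$-step on the other. This is the one place where item~\ref{lemmaalphapreservesfreevars0} is used to close the gap.

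Items~\ref{lemmaalphapreservesfreevars2} and~\ref{lemmaalphapreservesfreevars3} then follow with comparatively little work. For item~\ref{lemmaalphapreservesfreevars2}, the second claim $A\langle\langle v:=C\rangle\rangle =_\alpha B\langle\langle v:=C\rangle\rangle$ when $A=_\alpha B$ follows by induction on the derivation of $A=_\alpha B$, the base case being an instance of item~\ref{lemmaalphapreservesfreevars1} (reading $\la w.A_0 \rightarrow_\alpha \la{w'}.A_0\langle\langle w:=w'\rangle\rangle$, applying $\langle\langle v:=C\rangle\rangle$ to both sides, and commuting the two replacements using item~\ref{lemmaalphapreservesfreevars1} and item~\ref{lemmaalphapreservesfreevars3} with suitable variable choices); the first claim $C\langle\langle v:=A\rangle\rangle =_\alpha C\langle\langle v:=B\rangle\rangle$ is an induction on $C$ that is entirely straightforward once one knows, from item~\ref{lemmaalphapreservesfreevars0}, that $FV(A)=FV(B)$ so the same clause of Definition~\ref{orderedsubstdef} fires at corresponding steps. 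Finally item~\ref{lemmaalphapreservesfreevars3} is exactly Lemma~\ref{orderedmeta-sublema}.\ref{orderedmeta-sublema3} with its bound-variable side condition deleted: given arbitrary $A$, one first $\alpha$-reduces $A$ to a term $A'$ all of whose bound variables avoid $FV(BC)$ (possible because $=_\alpha = \rrightarrow_\alpha$ and we may rename bound variables freely), applies Lemma~\ref{orderedmeta-sublema}.\ref{orderedmeta-sublema3} to $A'$, and transports the resulting $=_{\cal M}$-equality back to $A$ along $=_\alpha$ using item~\ref{lemmaalphapreservesfreevars2}.

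The main obstacle is the abstraction case of item~\ref{lemmaalphapreservesfreevars1} (and the analogous bookkeeping in item~\ref{lemmaalphapreservesfreevars2}): clause~6 of Definition~\ref{orderedsubstdef} forces a specific fresh variable from the ordered list, and the fresh variable demanded by the left-hand composite need not literally match the one demanded by the right-hand side, so one cannot hope for $=_{\cal M}$ and must carefully track that the two results differ only by a renaming of bound variables, i.e.\ are $=_\alpha$. Managing which of clauses 4/5/6 fires on each side, and exhibiting the connecting $\alpha$-steps, is the delicate part; everything else is a routine structural induction.
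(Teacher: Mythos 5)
Your handling of parts~\ref{lemmaalphapreservesfreevars0} and~\ref{lemmaalphapreservesfreevars1} matches the paper's proof (induction on the derivation of $A=_\alpha B$ using Lemma~\ref{orderedmeta-sublema}.(\ref{orderedlemmasubstnofreevars},\ref{orderedlemmafreevarsofasubst}) for the $\alpha$-base case, and induction on $A$ for the composed replacements), and your diagnosis that clause~6 of Definition~\ref{orderedsubstdef} is where all the work lies is exactly right. The genuine gap is the dependency structure between parts~\ref{lemmaalphapreservesfreevars2} and~\ref{lemmaalphapreservesfreevars3}. You prove the second claim of part~\ref{lemmaalphapreservesfreevars2} by induction on the derivation of $A=_\alpha B$ and, in the base case, you explicitly ``commute the two replacements using \dots\ item~\ref{lemmaalphapreservesfreevars3}''; but you then prove part~\ref{lemmaalphapreservesfreevars3} by renaming the bound variables of $A$ away from $FV(BC)$, applying Lemma~\ref{orderedmeta-sublema}.\ref{orderedmeta-sublema3}, and transporting the result back ``using item~\ref{lemmaalphapreservesfreevars2}''. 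As written this is circular, and the circle does not obviously break on any measure: the instance of part~\ref{lemmaalphapreservesfreevars3} invoked inside the base case of part~\ref{lemmaalphapreservesfreevars2} concerns terms of the shape $A_0\langle\langle w:=w'\rangle\rangle\langle\langle v:=C\rangle\rangle$, whose length can be far larger than that of the terms in the derivation being inducted on, so a simultaneous induction on length does not go through without a further argument. The paper avoids the cycle by proving part~\ref{lemmaalphapreservesfreevars3} directly by induction on $\#A$ (with auxiliary sublemmas), using only Lemma~\ref{orderedmeta-sublema}.(\ref{orderedlemmasubstnofreevars},\ref{orderedlemmafreevarsofasubst}) --- not by reduction to Lemma~\ref{orderedmeta-sublema}.\ref{orderedmeta-sublema3} via renaming --- and by proving part~\ref{lemmaalphapreservesfreevars2} from parts~\ref{lemmaalphapreservesfreevars0} and~\ref{lemmaalphapreservesfreevars1} alone. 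You need either to do the same, or to isolate and prove independently the restricted commutation fact your base case actually uses.

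Two smaller points. First, your route to part~\ref{lemmaalphapreservesfreevars3} (rename, apply Lemma~\ref{orderedmeta-sublema}.\ref{orderedmeta-sublema3}, transport) would be a clean alternative to the paper's direct induction \emph{if} part~\ref{lemmaalphapreservesfreevars2} were already secured; but it also relies silently on the fact that every term can be $\alpha$-converted so that its bound variables avoid a given finite set, which the paper establishes only later (Lemma~\ref{lemmaalphatranslate}.\ref{lemmaalphatranslate3}, and for $\alpha'$ rather than $\alpha$), so you would have to supply that as a separate induction here. Second, for the first claim of part~\ref{lemmaalphapreservesfreevars2}, a structural induction on $C$ does not quite work: when clause~6 fires, the recursive call is on $C'\langle\langle v':=v''\rangle\rangle$, which is not a subterm of $C$; you need an induction on length (as in the paper), using that replacement by a variable preserves $\#$.
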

\begin{proof}
  \ref{lemmaalphapreservesfreevars0}.\ is by induction on the derivation $A=_\alpha B$ using Lemma~\ref{orderedmeta-sublema}.(\ref{orderedlemmasubstnofreevars} and~\ref{orderedlemmafreevarsofasubst}) for the case $A =_{\cal M} \la v.A' \rightarrow_\alpha \la v'.A'\langle\langle v:=v'\rangle\rangle =_{\cal M} B$ where $v'\not\in FV(A')$.\\
  \ref{lemmaalphapreservesfreevars1}.\ is by induction on $A$ using Lemma~\ref{orderedmeta-sublema}.(\ref{orderedlemmasubstnofreevars} and~\ref{orderedlemmafreevarsofasubst}). This also involves a number of sublemmas, all are proved by straightforward induction (see \cite{curry/feys}).\\
  \ref{lemmaalphapreservesfreevars2}.\
  The proof of $C\langle\langle v:=A \rangle\rangle =_\alpha C\langle\langle v:=B \rangle\rangle$ is by induction on $\#A$.
  The proof of $A\langle\langle v:=C \rangle\rangle =_\alpha B\langle\langle v:=C \rangle\rangle$ is by induction on the derivation  $A=_\alpha B$.  Both proofs use \ref{lemmaalphapreservesfreevars0}.\ above.\\
    \ref{lemmaalphapreservesfreevars3}.\ By induction on  $\#A$ using Lemma~\ref{orderedmeta-sublema}.(\ref{orderedlemmasubstnofreevars} and~\ref{orderedlemmafreevarsofasubst}).  This also involves a number of sublemmas, all are proved by straightforward induction.
\end{proof}
  \subsection{Defining $=_{\alpha'}$}
  In Section~\ref{compsec} we stated that the use of a replacement/substitution via an  ordered list of variables suffers from complications and in Lemma~\ref{eqalphalemmaden} we saw that 
we can remove the bound variables conditions in 
Lemma~\ref{orderedmeta-sublema}.\ref{orderedmeta-sublema3} as long as we use $=_{\alpha}$ instead of  $=_{\cal M}$.  But $=_\alpha$ (Definition~\ref{alphared}) is still built on the replacement/substitution via an  ordered list of variables (Definition \ref{orderedsubstdef}).

 Can we forget completely about the ordered variable list  and the replacement notion using the ordered variable list (step 1 in the above approach), and simply use grafting (Definition~\ref{newgraftingdef}) to define an alternative to $\alpha$-conversion that we can use to define  $\beta$-reduction?

  The next definition attempts to introduce syntactic equivalence $=_{\alpha'}$ which uses grafting but will coincide with the $\alpha$-congruence (which is based on the replacement which uses an ordered variable list).   The use of grafting  here will not cause problems since we are applying grafting in a well controlled situation which is guaranteed by the preconditions of the $ ({\alpha'})$ rule (recall Lemma~ \ref{orderedmeta-sublema}.\ref{orderedmeta-sublema4}).

  \begin{definition}
   \label{syntactiequidef}
        We define  $\rightarrow_{\alpha'}$ as the compatible closure of the following rule:
  $$ ({\alpha'}) \hspace{0.1in}  \la v.A \rightarrow_{\alpha'} \la v'.A\{v:=v'\} \hspace{0.2in}\mbox{ if  $v'\not\in FV(vA)$ and $ v,v'\not\in BV(A) $}$$
 
   We define $\rrightarrow_{\alpha'}$ (resp.\ $=_{\alpha'}$) as the reflexive transitive (resp.\ equivalence) closure of $\rightarrow_{\alpha'}$.
  \end{definition}
  
  \begin{lemma}
    \label{alpha0issymmetric}
  If  $v'\not\in FV(vA)$ and $ v,v'\not\in BV(A) $ then $A\{v:=v'\}\{v':=v\} = A$ and hence $\rightarrow_{\alpha'}$ (resp.\ $\rrightarrow_{\alpha'}$) is symmetric and so, $\rrightarrow_{\alpha'}$ is the same relation as $=_{\alpha'}$ on $\la$-terms.
  \end{lemma}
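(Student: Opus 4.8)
The plan is to prove the core identity $A\{v:=v'\}\{v':=v\} =_{\cal M} A$ by structural induction on $A$, under the standing hypotheses $v'\notin FV(vA)$ and $v,v'\notin BV(A)$; the symmetry of $\rightarrow_{\alpha'}$, $\rrightarrow_{\alpha'}$ and the coincidence of $\rrightarrow_{\alpha'}$ with $=_{\alpha'}$ will then follow by easy closure arguments. First I would handle the base cases: if $A$ is the variable $v$, then $v\{v:=v'\} =_{def} v'$ and $v'\{v':=v\} =_{def} v$, using clause~1 of Definition~\ref{graftingdef} twice (note $v\neq v'$ since $v'\notin FV(vA)$ forces $v'\neq v$). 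If $A$ is a variable $w$ with $w\neq v$, then since $v'\notin FV(vA)$ we have $w\neq v'$ as well, so $w\{v:=v'\} =_{def} w$ and then $w\{v':=v\} =_{def} w$ by clause~2 each time. The application case $A =_{\cal M} (A_1 A_2)$ is immediate from clause~3 of Definition~\ref{graftingdef} and the induction hypothesis applied to $A_1$ and $A_2$, after observing that the hypotheses $v'\notin FV(vA_i)$ and $v,v'\notin BV(A_i)$ are inherited from those on $A$.

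The abstraction case $A =_{\cal M} \la w.A_0$ is where the preconditions of the $(\alpha')$ rule earn their keep. Since $v,v'\notin BV(A)$ and $w\in BV(A)$, we have $w\neq v$ and $w\neq v'$. Hence clause~5 of Definition~\ref{graftingdef} applies (not clause~4): $(\la w.A_0)\{v:=v'\} =_{def} \la w.A_0\{v:=v'\}$. To apply clause~5 again for the substitution $\{v':=v\}$ I need $w\neq v'$ (already have it) so that $(\la w.A_0\{v:=v'\})\{v':=v\} =_{def} \la w.A_0\{v:=v'\}\{v':=v\}$. Now the induction hypothesis on $A_0$ — whose hypotheses $v'\notin FV(vA_0)$ and $v,v'\notin BV(A_0)$ follow from those on $\la w.A_0$ — gives $A_0\{v:=v'\}\{v':=v\} =_{\cal M} A_0$, and compatibility of $=_{\cal M}$ (Definition~\ref{defequalitys1}) closes the case. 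I expect this abstraction case to be the main obstacle only in the bookkeeping sense: one must be careful that the two substitutions each go through clause~5 and never trip into the "$v'$ got captured" situation, which is exactly what $v'\notin FV(vA)$ (no free $v'$ ever appears, so $\{v':=v\}$ only ever hits the freshly-introduced occurrences) and $v,v'\notin BV(A)$ (no bound variable clash) guarantee. In particular the combined hypotheses ensure that $A\{v:=v'\}$ still satisfies "$v\notin FV(vA\{v:=v'\})$ fails, but $v\notin BV(\cdot)$" — more simply, the free occurrences of $v'$ in $A\{v:=v'\}$ are precisely the images of the free occurrences of $v$ in $A$, so substituting $v$ back restores $A$ exactly.

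Finally, for the consequences: from the identity, if $\la v.A \rightarrow_{\alpha'} \la v'.A\{v:=v'\}$ via $(\alpha')$, then I would check that $\la v'.A\{v:=v'\} \rightarrow_{\alpha'} \la v.A\{v:=v'\}\{v':=v\} =_{\cal M} \la v.A$, which requires verifying the side conditions of $(\alpha')$ for this reverse step, namely $v\notin FV(v'\,A\{v:=v'\})$ and $v',v\notin BV(A\{v:=v'\})$ — both of which follow from Lemma~\ref{orderedmeta-sublema}.\ref{orderedlemmafreevarsofasubst}-style reasoning on $FV$ and from $BV(A\{v:=v'\}) = BV(A)$ together with the standing hypotheses. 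Since $\rightarrow_{\alpha'}$ is the compatible closure of a rule whose every instance is thus reversible, $\rightarrow_{\alpha'}$ is symmetric; its reflexive transitive closure $\rrightarrow_{\alpha'}$ is then symmetric as well, and being reflexive and transitive it is already an equivalence relation, so it coincides with $=_{\alpha'}$, the equivalence closure of $\rightarrow_{\alpha'}$.
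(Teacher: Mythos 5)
Your proposal is correct and follows essentially the same route as the paper: structural induction on $A$ for the identity $A\{v:=v'\}\{v':=v\}=_{\cal M}A$, then reversibility of each $(\alpha')$ instance (amounting to the paper's induction on the derivation of $\rightarrow_{\alpha'}$) to get symmetry, from which symmetry of $\rrightarrow_{\alpha'}$ and its coincidence with $=_{\alpha'}$ follow. The paper only sketches these steps; your write-up supplies the details, including the correct verification of the side conditions for the reverse $(\alpha')$ step.
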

  \begin{proof}
    The first part is by induction on $A$. Symmetry of $\rightarrow_{\alpha'}$ is by induction on $\rightarrow_{\alpha'}$ using what we just proved.  Symmetry of $\rrightarrow_{\alpha'}$ then follows.
    \end{proof}
  Here is a lemma that establishes that using grafting inside the $\alpha'$ rule is safe and that $=_\alpha$ is the same relation as $=_{\alpha'}$.  
 \begin{lemma}
   \label{lemmaalphatranslate}
   \begin{enumerate}
      \item
        \label{lemmaalphatranslate0}
        If $A\rightarrow_{\alpha'} B$ then $\#A = \#B$, $FV(A) = FV(B)$ and if $A = \la v.A'$ then $B = \la v'.B'$.
   \item
     \label{lemmaalphatranslate1}
     If  $v'\not\in FV(vA)$ and $ v,v'\not\in BV(A) $ then 
     $A\{v:=v'\} =_{\cal M} A\langle\langle v:=v'\rangle\rangle$.
   \item
     \label{lemmaalphatranslate2}
     If $A \rightarrow_{\alpha'} B$  then $A \rightarrow_{\alpha} B$.  Hence if $A \rrightarrow_{\alpha'} B$ 
     (resp.\ $A=_{\alpha'} B$) then $A \rrightarrow_{\alpha} B$ (resp.\ $A =_\alpha B$).
   \item
     \label{lemmaalphatranslate3}
     For any $A$, $v_1, v_2, \cdots v_n$, we can find $A'$ such that $A\rrightarrow_{\alpha'} A'$ and $BV(A')\cap \{v_1, v_2, \cdots v_n\} = \emptyset$.
   \item
     \label{lemmaalphatranslate4}
     For any $A$, $v$ and $v'\not\in FV(vA)$, there is $A'$ such that $v'\not\in FV(vA')$, $v,v'\not\in BV(A')$, $A \rrightarrow_{\alpha'} A'$ and $A\langle\langle v:=v'\rangle\rangle \rrightarrow_{\alpha'} A'\langle\langle v:=v'\rangle\rangle$.
   \item
     \label{lemmaalphatranslate5}
     If $A \rightarrow_{\alpha} B$  then $A \rrightarrow_{\alpha'} B$.   Hence if $A \rrightarrow_{\alpha} B$ (resp.\ $A=_{\alpha} B$) then $A \rrightarrow_{\alpha'} B$ (resp.\ $A =_{\alpha'} B$).
     \item
       \label{rrightarrowalphasamealphap}
     $=_\alpha$ is the same relation as $=_{\alpha'}$
     \item
       \label{rrightarrowalphasymmetric}
     $\rrightarrow_\alpha$ is symmetric.
   \end{enumerate}
 \end{lemma}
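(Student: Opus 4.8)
The plan is to prove the eight items essentially in the order listed, since each one uses only the earlier ones; the real work sits in items~\ref{lemmaalphatranslate3} and~\ref{lemmaalphatranslate4}, and item~\ref{lemmaalphatranslate4} is where I expect to fight hardest.

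Item~\ref{lemmaalphatranslate1} is just Lemma~\ref{orderedmeta-sublema}.\ref{orderedmeta-sublema4} restated so that it can be quoted below. For item~\ref{lemmaalphatranslate0} I would induct on the derivation of $A\rightarrow_{\alpha'}B$. In the base case $\la v.A\rightarrow_{\alpha'}\la v'.A\{v:=v'\}$ with $v'\notin FV(vA)$ and $v,v'\notin BV(A)$, a one-line induction on $A$ gives $\#(A\{v:=v'\})=\#A$, and item~\ref{lemmaalphatranslate1} identifies $A\{v:=v'\}$ with $A\langle\langle v:=v'\rangle\rangle$, so Lemma~\ref{orderedmeta-sublema}.(\ref{orderedlemmasubstnofreevars} and~\ref{orderedlemmafreevarsofasubst}) together with $v'\notin FV(A)$ yield $FV(\la v'.A\{v:=v'\})=FV(A)\setminus\{v\}=FV(\la v.A)$, and both terms are abstractions. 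In a compatible-closure step the rewrite happens strictly inside one subterm, so length and the outermost constructor are untouched and the free-variable set is unchanged by the induction hypothesis. Item~\ref{lemmaalphatranslate2} then drops out: in the base case $v'\notin FV(A)$ licenses $\la v.A\rightarrow_\alpha\la v'.A\langle\langle v:=v'\rangle\rangle$, which by item~\ref{lemmaalphatranslate1} is the very term $\la v'.A\{v:=v'\}$; the compatible-closure cases and the passage to $\rrightarrow_\alpha$ and $=_\alpha$ are routine because $\rightarrow_\alpha$ is compatible.

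Item~\ref{lemmaalphatranslate3} is an induction on $\#A$: variables and applications are immediate (in the latter, apply the induction hypothesis to each factor and combine using compatibility), and for $A=\la u.B$ the induction hypothesis gives $B\rrightarrow_{\alpha'}B'$ with $BV(B')\cap\{v_1,\dots,v_n\}=\emptyset$, hence $\la u.B\rrightarrow_{\alpha'}\la u.B'$; if $u\notin\{v_1,\dots,v_n\}$ we stop, and otherwise $u\in\{v_1,\dots,v_n\}$ already forces $u\notin BV(B')$, so we may pick a variable $w$ outside $FV(B')\cup BV(B')\cup\{v_1,\dots,v_n\}$ and fire $(\alpha')$ once more to $\la w.B'\{u:=w\}$, whose bound variables are $\{w\}\cup BV(B')$ because grafting $w$ for $u$ disturbs no binder. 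Item~\ref{lemmaalphatranslate4}, again proved by induction on $\#A$, is the crux: variables and applications are direct (for $BC$ use rule~3 of Definition~\ref{orderedsubstdef} and compatibility), but for $A=\la u.B$ the ordered-list replacement $\langle\langle v:=v'\rangle\rangle$ and the grafting $\{u:=w\}$ must be driven in step. If $u\in\{v,v'\}$ I would first invoke item~\ref{lemmaalphatranslate3} to push $BV(B)$ off $\{u,v,v'\}$ and a fresh $w$, then rename the binder $u$ to $w$ by $(\alpha')$ --- checking that this single step carries both $A\rrightarrow_{\alpha'}$ and $A\langle\langle v:=v'\rangle\rangle\rrightarrow_{\alpha'}$ to the renamed term (this uses Lemma~\ref{orderedmeta-sublema}.\ref{orderedlemmasubstnofreevars} to collapse $(B'\{u:=w\})\langle\langle v:=v'\rangle\rangle$) --- which reduces everything to the case $u\notin\{v,v'\}$. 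In that case $v'\notin FV(vB)$ holds, the induction hypothesis on $B$ supplies $B'$ with $B\rrightarrow_{\alpha'}B'$, $v,v'\notin BV(B')$, $v'\notin FV(vB')$ and $B\langle\langle v:=v'\rangle\rangle\rrightarrow_{\alpha'}B'\langle\langle v:=v'\rangle\rangle$, and $A':=\la u.B'$ does the job: the four conditions follow from item~\ref{lemmaalphatranslate0}, and since $u\neq v'$ rule~5 of Definition~\ref{orderedsubstdef} peels the binder off both $(\la u.B)\langle\langle v:=v'\rangle\rangle$ and $(\la u.B')\langle\langle v:=v'\rangle\rangle$, so $A\langle\langle v:=v'\rangle\rangle\rrightarrow_{\alpha'}A'\langle\langle v:=v'\rangle\rangle$ is the induction hypothesis under a context. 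The whole difficulty is keeping track of which of rules~4--6 of Definition~\ref{orderedsubstdef} fires where, and that is exactly why item~\ref{lemmaalphatranslate3} must tidy the bound variables before any binder is renamed.

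For item~\ref{lemmaalphatranslate5} I would induct on the derivation of $A\rightarrow_\alpha B$. In the base case $\la v.A\rightarrow_\alpha\la v'.A\langle\langle v:=v'\rangle\rangle$ with $v'\notin FV(A)$: if $v'=v$ the step is trivial, and otherwise item~\ref{lemmaalphatranslate4} yields $A'$ with $A\rrightarrow_{\alpha'}A'$ satisfying the side conditions of $(\alpha')$, so $\la v.A\rrightarrow_{\alpha'}\la v.A'\rightarrow_{\alpha'}\la v'.A'\{v:=v'\}=_{\cal M}\la v'.A'\langle\langle v:=v'\rangle\rangle$ by item~\ref{lemmaalphatranslate1}, while $\la v'.A\langle\langle v:=v'\rangle\rangle\rrightarrow_{\alpha'}\la v'.A'\langle\langle v:=v'\rangle\rangle$ by compatibility and item~\ref{lemmaalphatranslate4}; symmetry of $\rrightarrow_{\alpha'}$ (Lemma~\ref{alpha0issymmetric}) then splices these into $\la v.A\rrightarrow_{\alpha'}\la v'.A\langle\langle v:=v'\rangle\rangle$. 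The compatible-closure cases and closure statements are routine. Finally, items~\ref{rrightarrowalphasamealphap} and~\ref{rrightarrowalphasymmetric} are bookkeeping: items~\ref{lemmaalphatranslate2} and~\ref{lemmaalphatranslate5} together force $\rrightarrow_\alpha$ to coincide with $\rrightarrow_{\alpha'}$ and $=_\alpha$ with $=_{\alpha'}$, and Lemma~\ref{alpha0issymmetric} says $\rrightarrow_{\alpha'}$ is symmetric, whence $=_\alpha={=_{\alpha'}}$ and $\rrightarrow_\alpha$ is symmetric.
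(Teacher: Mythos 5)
Your overall architecture matches the paper's: items in the listed order, item~\ref{lemmaalphatranslate1} identified with Lemma~\ref{orderedmeta-sublema}.\ref{orderedmeta-sublema4}, item~\ref{lemmaalphatranslate3} by renaming binders one at a time, item~\ref{lemmaalphatranslate4} by induction on $\#A$ with the easy binder case handled by the IH under rule~5 of Definition~\ref{orderedsubstdef}, and items~\ref{lemmaalphatranslate5}--\ref{rrightarrowalphasymmetric} assembled from~\ref{lemmaalphatranslate4}, Lemma~\ref{alpha0issymmetric} and~\ref{lemmaalphatranslate2}. Items~\ref{lemmaalphatranslate0}--\ref{lemmaalphatranslate3} and~\ref{lemmaalphatranslate5}--\ref{rrightarrowalphasymmetric} are fine as sketched.

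There is, however, a genuine gap in item~\ref{lemmaalphatranslate4}, precisely at the subcase you compress into ``rename the binder $u$ to $w$ by a single $(\alpha')$ step and collapse with Lemma~\ref{orderedmeta-sublema}.\ref{orderedlemmasubstnofreevars}.'' That collapse only works when the outer replacement acts trivially on the renamed term, i.e.\ when $u=v$ (so $v$ no longer occurs free) or when $u=v'$ with $v\not\in FV(B)$. In the remaining subcase $A=_{\cal M}\la v'.B$ with $v\in FV(B)$, clause~6 of Definition~\ref{orderedsubstdef} fires and $A\langle\langle v:=v'\rangle\rangle =_{\cal M} \la v''.B\langle\langle v':=v''\rangle\rangle\langle\langle v:=v'\rangle\rangle$ for the first fresh $v''$; nothing collapses, and to connect this term by $\rrightarrow_{\alpha'}$ to $A'\langle\langle v:=v'\rangle\rangle$ you must commute an $\alpha'$-renaming past the \emph{inner} replacement $\langle\langle v':=v''\rangle\rangle$ --- which is an instance of the very statement being proved. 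The paper handles this by a second appeal to the induction hypothesis, applied to $B$ with the substitution $\langle\langle v':=v''\rangle\rangle$ (legitimate because $\#B<\#A$ and variable-for-variable replacement preserves length), followed by an $(\alpha')$ step justified by item~\ref{lemmaalphatranslate1} and an inversion via the symmetry of $\rrightarrow_{\alpha'}$ (Lemma~\ref{alpha0issymmetric}) to recover $\la v'.B\rrightarrow_{\alpha'}\la v''.C$. This double use of the IH with two different substitutions is the heart of the lemma and is absent from your sketch; without it the ``reduction to the case $u\not\in\{v,v'\}$'' does not go through.
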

 \begin{proof}
   \begin{enumerate}
   \item
     Easy induction.
   \item
     By induction on $A$.
   \item
     By induction on the derivation $A\rightarrow_{\alpha'} B$ resp.\  $A\rrightarrow_{\alpha'} B$
     resp.\ $A=_{\alpha'} B$.
   \item
     By induction on $A$. We only do the case $A =_{\cal M} \la v.B$.  By induction hypothesis, there is $B'$ such that $B\rrightarrow_{\alpha'} B'$ and $BV(B')\cap (\{v_1, v_2, \cdots v_n\}\cup\{v\}) = \emptyset$. Let $v'\not\in (FV(vB')\cup \{v_1, v_2, \cdots v_n\}\cup BV(B'))$.  Then, $A =_{\cal M} \la v.B  \rrightarrow_{\alpha'} \la v.B'  \rightarrow_{\alpha'} \la v'.B'\{v:=v'\}$.  Let  $A'=_{\cal M} \la v'.B'\{v:=v'\}$.  Now, 
     $BV(A')\cap \{v_1, v_2, \cdots, v_n\} = (\{v'\}\cup BV(B')) \cap \{v_1, v_2, \cdots, v_n\} = (\{v'\}\cap \{v_1, v_2, \cdots, v_n\})
     \cup (BV(B') \cap \{v_1, v_2, \cdots, v_n\})=\emptyset$. 
   \item
     By induction on $\# A$.   We only do the case $A = \la v''.B$.
     \begin{itemize}
     \item
       Case $v''\not\in \{v,v'\}$ then by the induction hypothesis (IH), there is $B'$ such that  $v'\not\in FV(vB')$, $v,v'\not\in BV(B')$, $B \rrightarrow_{\alpha'} B'$ and $B\langle\langle v:=v'\rangle\rangle \rrightarrow_{\alpha'} B'\langle\langle v:=v'\rangle\rangle$.  Hence  $v'\not\in FV(v(\la v''.B'))$, $v,v'\not\in BV(\la v''.B')$, $\la v''. B \rrightarrow_{\alpha'} \la v''. B'$ and $\la v''. B\langle\langle v:=v'\rangle\rangle \rrightarrow_{\alpha'} \la v''.B'\langle\langle v:=v'\rangle\rangle$. But $(\la v''. B)\langle\langle v:=v'\rangle\rangle =_{\cal M} \la v''. B\langle\langle v:=v'\rangle\rangle$ and $(\la v''. B')\langle\langle v:=v'\rangle\rangle =_{\cal M} \la v''. B'\langle\langle v:=v'\rangle\rangle$.  We are done.
\item
  Case $v''=v$ (i.e., $A =_{\cal M} \la v.B$) then by \ref{lemmaalphatranslate3}.\ and  \ref{lemmaalphatranslate0}.\ above, there is $\la v_1. B'$ such that $\la v. B \rrightarrow_{\alpha'} \la v_1. B'$, $\#B = \#B'$, $FV(\la v. B)=FV(\la v_1. B')$ and $BV(\la v_1.B') \cap \{v,v'\}= \emptyset$.  
  By IH, let $C$ be such that 
  $B' \rrightarrow_{\alpha'} C$, $B'\langle\langle v:=v'\rangle\rangle \rrightarrow_{\alpha'} C\langle\langle v:=v'\rangle\rangle$, $v'\not\in FV(C)$ and $v,v'\not\in BV(C)$.  Hence
  $\la v. B \rrightarrow_{\alpha'}\la v_1. B' \rrightarrow_{\alpha'} \la v_1. C$, $v'\not\in FV(v(\la v_1.C))$, $v,v'\not\in BV(\la v_1.C)$, and
  since $v\not\in FV(\la v_1. C)$, then
  $(\la v. B)\langle\langle v:=v'\rangle\rangle =_{\cal M} \la v.B \rrightarrow_{\alpha'} \la v_1. C=_{\cal M} ( \la v_1. C)\langle\langle v:=v'\rangle\rangle$.
\item
  Case $v'' = v'$  (i.e., $A =_{\cal M} \la v'.B$) and $v\not\in FV(B)$ then by \ref{lemmaalphatranslate3}.\ and  \ref{lemmaalphatranslate0}.\ above, there is $\la v_1. B'$ such that $\la v'. B \rrightarrow_{\alpha'} \la v_1. B'$, $\#B = \#B'$, $FV(\la v'. B)=FV(\la v_1. B')$ and $BV(\la v_1.B') \cap \{v,v'\}= \emptyset$. Since $v'\not\in FV(B') \subseteq (FV(B)\setminus  \{v'\})\cup\{v_1\}$, then
  by IH, there is $C$ such that  
  $B' \rrightarrow_{\alpha'} C$, $B'\langle\langle v:=v'\rangle\rangle \rrightarrow_{\alpha'} C\langle\langle v:=v'\rangle\rangle$, $v'\not\in FV(C)$ and $v,v'\not\in BV(C)$  and by \ref{lemmaalphatranslate0}.\ above, $v\not\in FV(C)$.  Hence
  $\la v'. B \rrightarrow_{\alpha'}\la v_1. B' \rrightarrow_{\alpha'} \la v_1. C$, $v'\not\in FV(v(\la v_1.C))$, $v,v'\not\in BV(\la v_1.C)$, and
  since $v\not\in FV(v'v_1BC)$, 
  $(\la v'. B)\langle\langle v:=v'\rangle\rangle =_{\cal M} \la v'.B \rrightarrow_{\alpha'} \la v_1. C=_{\cal M} ( \la v_1. C)\langle\langle v:=v'\rangle\rangle$.
  
  \item
    Case $v'' = v'$  (i.e., $A =_{\cal M} \la v'.B$) and $v\in FV(B)$
    then \\ $(\la v'.B)\langle\langle v:=v'\rangle\rangle =_{\cal M} \la v''_1.B\langle\langle v':=v''_1\rangle\rangle\langle\langle v:=v'\rangle\rangle$
    where
    $v''_1$ is  the first variable such that $v''_1\not\in FV(v'B)$ (and so, $v''_1 \not = v$). By Lemma~\ref{orderedmeta-sublema}.(\ref{orderedlemmasubstnofreevars} and~\ref{orderedlemmafreevarsofasubst}), $v'\not\in FV(B\langle\langle v':=v''_1\rangle\rangle $.    
    By IH, there is $C$ such that $v'\not\in FV(C)$, $v,v'\not\in BV(C)$,
    $B\langle\langle v':=v''_1\rangle\rangle \rrightarrow_{\alpha'} C$, \\$\la v''_1.B\langle\langle v':=v''_1\rangle\rangle \rrightarrow_{\alpha'} \la v''_1.C$
    and\\ $B\langle\langle v':=v''_1\rangle\rangle\langle\langle v:=v'\rangle\rangle \rrightarrow_{\alpha'} C\langle\langle v:=v'\rangle\rangle$.

    Hence $\la v''_1.B\langle\langle v':=v''_1\rangle\rangle\langle\langle v:=v'\rangle\rangle \rrightarrow_{\alpha'} \la v''_1.C\langle\langle v:=v'\rangle\rangle$ and so,  $(\la v'.B)\langle\langle v:=v'\rangle\rangle \rrightarrow_{\alpha'} (\la v''_1.C)\langle\langle v:=v'\rangle\rangle$.

    All that is left now is to show that  $\la v'.B \rrightarrow_{\alpha'} \la v''_1.C$.

    Since $v''_1\not\in FV(v'B)$ then by IH, there is $D$ such that  $v',v''_1\not\in BV(D)$, $v''_1\not\in FV(v'D)$, $B \rrightarrow_{\alpha'} D$,  and 
    $B\langle\langle v':=v''_1\rangle\rangle\rrightarrow_{\alpha'} D\langle\langle v':=v''_1\rangle\rangle$.  Hence, 
    $\la v'.B \rrightarrow_{\alpha'} \la v'. D$, and \\
    $ \la v''_1.B\langle\langle v':=v''_1\rangle\rangle\rrightarrow_{\alpha'}  \la v''_1.D\langle\langle v':=v''_1\rangle\rangle$.\\
    Since  $v',v''_1\not\in BV(D)$, $v''_1\not\in FV(v'D)$, then  by ~\ref{lemmaalphatranslate1}.\ above,  \\
    $\la v'.D \rightarrow_{\alpha'} \la v''_1. D\{v':=v''_1\} =_{\cal M} \la v''_1. D\langle\langle v':=v''_1\rangle\rangle$.\\
    Since $\rrightarrow_{\alpha'}$ is symmetric (Lemma~\ref{alpha0issymmetric}), then\\
    $  \la v''_1.D\langle\langle v':=v''_1\rangle\rangle \rrightarrow_{\alpha'} 
    \la v''_1.B\langle\langle v':=v''_1\rangle\rangle$. Hence:\\
    $\la v'B. \rrightarrow_{\alpha'} \la v'. D  \rightarrow_{\alpha'}  \la v''_1. D\langle\langle v':=v''_1\rangle\rangle \rrightarrow_{\alpha'} 
    \la v''_1.B\langle\langle v':=v''_1\rangle\rangle \rrightarrow_{\alpha'} \la v''_1.C$ and we are done.
       
       \end{itemize}
   \item
     The proof of $ \rightarrow_{\alpha} \subseteq \rrightarrow_{\alpha'}$ is by induction on the derivation $A \rightarrow_{\alpha} B$. We only do one case.\\
     Assume $\la v.C  \rightarrow_{\alpha} \la v'.C\langle\langle v:=v'\rangle \rangle$ where $v'\not\in FV(vC)$ (the case $v = v'$ is trivial). Then by \ref{lemmaalphatranslate4}.\ above,  there is $C'$ such that $v'\not\in FV(vC')$, $v,v'\not\in BV(C')$, $C \rrightarrow_{\alpha'} C'$ and $C\langle\langle v:=v'\rangle\rangle \rrightarrow_{\alpha'} C'\langle\langle v:=v'\rangle\rangle$.  Hence  $\la v.C \rrightarrow_{\alpha'} \la v.C'$ and $\la v'.C\langle\langle v:=v'\rangle\rangle \rrightarrow_{\alpha'} \la v'.C'\langle\langle v:=v'\rangle\rangle$. By symmetricity Lemma~\ref{alpha0issymmetric}, $\la v'.C'\langle\langle v:=v'\rangle\rangle \rrightarrow_{\alpha'} \la v'.C\langle\langle v:=v'\rangle\rangle$.
     By ~\ref{lemmaalphatranslate1}.\ above,  $\la v'.C'\{ v:=v'\}=_{\cal M}\la v'.C'\langle\langle v:=v'\rangle\rangle
     $.\\
     Hence
     $\la v.C \rrightarrow_{\alpha'} \la v.C' \rightarrow_{\alpha'} \la v'.C'\langle\langle v:=v'\rangle\rangle \rrightarrow_{\alpha'} \la v'.C\langle\langle v:=v'\rangle\rangle$.
     \item
       Use \ref{lemmaalphatranslate2} and \ref{lemmaalphatranslate5} above.
     \item
       If $A\rrightarrow_\alpha B$ then by \ref{lemmaalphatranslate5}.\ above,  $A\rrightarrow_{\alpha'} B$ and  by  symmetric Lemma~\ref{alpha0issymmetric}, $B\rrightarrow_{\alpha'} A$ which mean by  \ref{lemmaalphatranslate2}.\ above,  $B\rrightarrow_\alpha A$.
      \end{enumerate} 
 \end{proof}
  
  Now that $=_\alpha$ is the same as $=_{\alpha'}$, we use $\equiv$ to denote these relations.

  \begin{definition}[$\equiv$]
    \label{equivdef}
    We write $A\equiv B$ iff $A =_{\alpha'} B$ iff $A =_{\alpha} B$.\\
    When $A\equiv B$, we say that $A$ and $B$ are syntactically equivalent.
   \end{definition}

 \section{Beta Reduction}
 \label{secbetaalpha}
 So far, $\be^w$-reduction of Table~\ref{graftingnogood} does not work and $\overline{\be}$-reduction of Definition~\ref{overlinebetared} does not satisfy CR. But, we are ready to define the computation that will work and will guarantee CR.
 
 \subsection{Computation ${\beal}$-reduction based on $=_\alpha$}
 In this section we define the $\be$-reduction relation (called here ${\beal}$) given in~\cite{hindley/seldin:LCCI}.
 Note that
 although $\rightarrow_{{\beal}}$ is the same as  $\rightarrow_{\overline{\be}}$, a ${\beal}$-redex (resp.\ ${\beal}$-contractum)
 is also a $\overline{\be}$-redex (resp.\ ${\overline{\be}}$-contractum)
 and vice-versa.  However,
 the reflexive transitive closure $\rrightarrow_{\beal}$ incorporates also $\alpha$-reduction (unlike $\rrightarrow_{\overline{\be}}$).
  \begin{definition}  \label{betared}
    We define $\rightarrow_{\beal}$ as $\rightarrow_{\overline{\be}}$ and similarly define a ${\beal}$-redex, a ${\beal}$-contractum and $\Gamma_{\beal}[R]$ when $R$ is a $\beal$-redex.

We define $\rrightarrow_{\beal}$  as the reflexive transitive  closure of $\rightarrow_{\beal}\cup \rightarrow_\alpha$.
  \end{definition}
  Unlike $\rrightarrow_{\overline{\beta}}$ (see Example~\ref{ecampledoesnotwork}), 
  this $\rrightarrow_{\beal}$ relation satisfies the CR property (see  \cite{hindley/seldin:ICLC}).

  The next help lemma which is based on lemmas 1.12, 1.13 and 1.14 of
  \cite{hindley/seldin:LCCI} establishes the closure of $\alpha$-congruence under ${\beal}$-reduction.  
  \begin{lemma}
    \label{redexesinsideforCR}
     \begin{enumerate}
     \item
       \label{redexesinsideforCR1}
        If $R$ is a ${\beal}$-redex and $v$, $A$ are such that $FV(vA)\cap BV(R)=\emptyset$ then $R\langle\langle v:=A \rangle\rangle $ is a ${\beal}$-redex and\\         $\Gamma_{\beal}[ R\langle\langle v:=A\rangle\rangle] =_\alpha (\Gamma_{\beal}[R])\langle\langle v:=A\rangle\rangle$.
      \item
        \label{redexesinsideforCR2}
        If $R$ is a ${\beal}$-redex and $R =_\alpha R'$ then\\ $R'$ is also a ${\beal}$-redex and $\Gamma_{\beal}[R] =_\alpha \Gamma_{\beal}[R']$.
          \item
        \label{redexesinsideforCR3}
       If $A =_\alpha B$ and ${\beal}$-redex $R$ occurs in $A$,  then a ${\beal}$-redex $R'$ occurs in $B$ such that if $A \rightarrow_{\beal} A'$ using ${\beal}$-redex $R$ and $B \rightarrow_{\beal} B'$ using ${\beal}$-redex $R'$, then $A' =_\alpha B'$.
                \end{enumerate}
        \end{lemma}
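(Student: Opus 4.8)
The plan is to prove the three items in order, using the earlier ones for the later ones; item~3 is where the real work sits.

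For item~1, write $R =_{\cal M} (\la w.P)Q$. Since $w\in BV(R)$, the hypothesis $FV(vA)\cap BV(R)=\emptyset$ gives $w\neq v$ and $w\notin FV(A)$, so clauses~3 and~5 of Definition~\ref{orderedsubstdef} make $R\langle\langle v:=A\rangle\rangle$ equal to $(\la w.P\langle\langle v:=A\rangle\rangle)(Q\langle\langle v:=A\rangle\rangle)$, evidently a $\beal$-redex; the claimed $=_\alpha$ between its contractum and $(\Gamma_{\beal}[R])\langle\langle v:=A\rangle\rangle =_{\cal M} (P\langle\langle w:=Q\rangle\rangle)\langle\langle v:=A\rangle\rangle$ is then exactly the substitution lemma, Lemma~\ref{lemmaalphapreservesfreevars}.\ref{lemmaalphapreservesfreevars3}, read with its two nested substitutions taken to be $\langle\langle w:=Q\rangle\rangle$ and then $\langle\langle v:=A\rangle\rangle$ (i.e.\ its $A,v,B,v',C$ instantiated by $P,w,Q,v,A$), whose hypotheses $w\neq v$, $w\notin FV(A)$ we have just checked. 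For item~2, symmetry of $\rrightarrow_\alpha$ (Lemma~\ref{lemmaalphatranslate}.\ref{rrightarrowalphasymmetric}) lets us read $R=_\alpha R'$ as $R\rrightarrow_\alpha R'$; since no application is an $\alpha$-redex, a short induction on the length of this reduction gives $R' =_{\cal M} (\la w'.P')Q'$ with $\la w.P=_\alpha\la w'.P'$ and $Q=_\alpha Q'$ (the left component stays an abstraction by Lemma~\ref{lemmaalphatranslate}.\ref{lemmaalphatranslate0}), so $R'$ is a $\beal$-redex. For the contracta I would first show, by induction along the chain $\la w.P\rrightarrow_\alpha\la w'.P'$, that $P\langle\langle w:=T\rangle\rangle=_\alpha P'\langle\langle w':=T\rangle\rangle$ for every $T$ --- the top-level renaming steps handled by Lemma~\ref{lemmaalphapreservesfreevars}.\ref{lemmaalphapreservesfreevars1}, the internal ones by Lemma~\ref{lemmaalphapreservesfreevars}.\ref{lemmaalphapreservesfreevars2} --- and then take $T=Q$ and pass from $Q$ to $Q'$ (again Lemma~\ref{lemmaalphapreservesfreevars}.\ref{lemmaalphapreservesfreevars2}) to conclude $\Gamma_{\beal}[R]=_\alpha\Gamma_{\beal}[R']$.

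For item~3, note first that $=_\alpha$ equals $=_{\alpha'}$, which equals $\rrightarrow_{\alpha'}$ (Lemmas~\ref{alpha0issymmetric} and~\ref{lemmaalphatranslate}.\ref{rrightarrowalphasamealphap}), so $A=_\alpha B$ unfolds as a finite chain of $\rightarrow_{\alpha'}$-steps. I would induct on its length, reducing everything to one $\alpha'$-step: $A\rightarrow_{\alpha'}B$ contracting an $\alpha'$-redex $S =_{\cal M}\la u.M$ to $\la u'.M\{u:=u'\}$ (so $u'\notin FV(uM)$ and $u,u'\notin BV(M)$) at a fixed occurrence, with $R =_{\cal M} (\la w.P)Q$ at a fixed occurrence of $A$. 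Three position cases: if the two occurrences are disjoint, put $R':=R$ and observe that the same $\alpha'$-step is still legal after contracting $R$ (its body $M$ is untouched), so $A'\rightarrow_{\alpha'}B'$ and hence $A'=_\alpha B'$; if $S$ occurs strictly inside $R$, then $R\rightarrow_{\alpha'}R^*$, so $R=_\alpha R^*$ (Lemma~\ref{lemmaalphatranslate}.\ref{lemmaalphatranslate2}), and item~2 makes $R^*$ a $\beal$-redex with $\Gamma_{\beal}[R]=_\alpha\Gamma_{\beal}[R^*]$, whence $A'=_\alpha B'$ with $R':=R^*$ by compatibility of $=_\alpha$ (it is the equivalence closure of the compatible relation $\rightarrow_\alpha$, hence compatible).

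The remaining case --- the occurrence of $R$ lies strictly inside $S$, hence inside $M$ --- is the crux. Here the $\alpha'$-precondition $u,u'\notin BV(M)$ is decisive: since $BV(R)\subseteq BV(M)$ we get $u,u'\notin BV(R)$ and $u'\notin FV(R)$, so by Lemma~\ref{lemmaalphatranslate}.\ref{lemmaalphatranslate1} the grafting $M\{u:=u'\}$ is literally the replacement $M\langle\langle u:=u'\rangle\rangle$, and because no $\la u$ nor $\la u'$ lies on the path down to $R$ inside $M$, this replacement acts purely structurally there; so the distinguished occurrence of $R$ becomes an occurrence of $R\langle\langle u:=u'\rangle\rangle$, which by item~1 (its hypothesis $FV(u\,u')\cap BV(R)=\emptyset$ now holding) is a $\beal$-redex --- this is $R'$. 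For the contracta: $FV(\Gamma_{\beal}[R])\subseteq FV(R)$ by Lemma~\ref{orderedmeta-sublema}.(\ref{orderedlemmasubstnofreevars} and~\ref{orderedlemmafreevarsofasubst}), so $u'$ is still not free in the $\la u$-body of $A'$ at that spot; an $\alpha$-step renames $u$ to $u'$, and structurally the result carries $(\Gamma_{\beal}[R])\langle\langle u:=u'\rangle\rangle$ at the distinguished position, whereas $B'$ carries $\Gamma_{\beal}[R\langle\langle u:=u'\rangle\rangle]$ there; these agree up to $=_\alpha$ by item~1, and pushing this equivalence out through the common outer context (compatibility of $=_\alpha$) gives $A'=_\alpha B'$. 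The main obstacle throughout is exactly this last case --- faithfully tracking the renamed redex-occurrence and showing that bound-variable renaming commutes with $\beal$-contraction up to $=_\alpha$; channelling $A=_\alpha B$ through $\alpha'$-steps rather than $\alpha$-steps is the trick that tames it, because the clean freshness conditions of the $(\alpha')$-rule force the renaming to behave as plain grafting on the relevant subterm, leaving items~1 and~2 to carry the real load.
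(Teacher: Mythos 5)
Your proofs of items \ref{redexesinsideforCR1} and \ref{redexesinsideforCR2} coincide with the paper's: unpack the redex and invoke the substitution-commutation lemma for the first, induct along the $=_\alpha$ derivation using Lemma~\ref{eqalphalemmaden}.(\ref{lemmaalphapreservesfreevars1}, \ref{lemmaalphapreservesfreevars2}) for the second. (You are in fact more precise in item \ref{redexesinsideforCR1}: the paper's citation of Lemma~\ref{orderedmeta-sublema}.\ref{orderedmeta-sublema3} carries a bound-variable side condition involving $FV(C)$ that the hypothesis $FV(vA)\cap BV(R)=\emptyset$ does not secure, whereas the $=_\alpha$ version, Lemma~\ref{eqalphalemmaden}.\ref{lemmaalphapreservesfreevars3}, which you cite, is the one that actually applies.) The genuine difference is item \ref{redexesinsideforCR3}: the paper proves nothing here and defers to Lemma A1.14(b) of Hindley and Seldin, while you give a self-contained argument. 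Your idea of trading the $=_\alpha$ hypothesis for a chain of $\rightarrow_{\alpha'}$-steps (legitimate by Lemmas~\ref{alpha0issymmetric} and~\ref{lemmaalphatranslate}.\ref{rrightarrowalphasamealphap}) and then case-splitting on the relative positions of the $\alpha'$-redex $\la u.M$ and the ${\beal}$-redex $R$ is sound, and the crucial case ($R$ inside $M$) works for exactly the reason you isolate: the side conditions $u,u'\notin BV(M)$ and $u'\notin FV(uM)$ prevent clauses 4 and 6 of Definition~\ref{orderedsubstdef} from firing anywhere in $M$, so $M\langle\langle u:=u'\rangle\rangle$ is computed purely structurally, the residual of $R$ is literally $R\langle\langle u:=u'\rangle\rangle$, and item \ref{redexesinsideforCR1} closes the diagram. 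What your route buys is a proof built entirely from the paper's own lemma stock rather than an external reference; the cost is occurrence bookkeeping that you leave implicit but which does check out (e.g.\ that $u'\notin FV(M)\cup BV(M)$ forces $u'\notin FV(R)$ for the subterm $R$, and that $FV(\Gamma_{\beal}[R])\subseteq FV(R)$ keeps the final $\alpha$-renaming of $u$ to $u'$ legal after the contraction).
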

  \begin{proof}
    \begin{itemize}
    \item[\ref{redexesinsideforCR1}.]
      Assume $R =_{\cal M} (\la v'.B)C$.  Then, $v'\not = v$ and $v'\not\in FV(A)$.  Hence, 
    $R\langle\langle v:=A\rangle\rangle =_{\cal M} (\la v'.B)\langle\langle v:=A\rangle\rangle C\langle\langle v:=A\rangle\rangle =_{\cal M} (\la v'.B\langle\langle v:=A\rangle\rangle)C\langle\langle v:=A\rangle\rangle$ is a ${\beal}$-redex.  Moreover,     \\     $\Gamma_{\beal} [R\langle\langle v:=A\rangle\rangle] =_\alpha B\langle\langle v:=A\rangle\rangle \langle\langle v':=C\langle\langle v:=A\rangle\rangle\rangle\rangle =_{\alpha}^{Lemma~\ref{orderedmeta-sublema}.\ref{orderedmeta-sublema3} } B\langle\langle v':=C\rangle\rangle\langle\langle v:=A\rangle\rangle =_\alpha 
      (\Gamma_{\beal} [R])\langle\langle v:=A\rangle\rangle$.
    \item[\ref{redexesinsideforCR2}.]
      By induction on the derivation $R=_\alpha R'$ using Lemmas~\ref{eqalphalemmaden}.(\ref{lemmaalphapreservesfreevars1}, \ref{lemmaalphapreservesfreevars2}).\\
    \item[      \ref{redexesinsideforCR3}.] See \cite{hindley/seldin:LCCI}, Lemma A1.14 (b).    
      \end{itemize}
    \end{proof}
  In the next lemma we connect both the ${\beal}$ and $\overline{\be}$ relations.
  \begin{lemma}\label{betaandbetaoverareconnected}
    \begin{enumerate}
    \item
      \label{betaandbetaoverareconnected1}
        $A \rightarrow_{\beal} B$ iff  $A \rightarrow_{\overline{\beta}} B$.
    \item
      \label{betaandbetaoverareconnected2}
        If   $A \rrightarrow_{\overline{\beta}} B$ then $A \rrightarrow_{\beal} B$.
      \item
        \label{betaandbetaoverareconnected3}
        If $A \rrightarrow_{\beal} B$ then there is $C$  such that $A\rrightarrow_{\overline{\beta}} C\rrightarrow_{\alpha} B$.
        \end{enumerate}
    \end{lemma}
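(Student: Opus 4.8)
The plan is to prove the three items in order, since item~\ref{betaandbetaoverareconnected1} is definitional, item~\ref{betaandbetaoverareconnected2} is an immediate consequence, and item~\ref{betaandbetaoverareconnected3} is the substantive one that requires commuting $\alpha$-steps past $\overline{\be}$-steps.

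For~\ref{betaandbetaoverareconnected1}, I would simply unfold Definition~\ref{betared}: $\rightarrow_{\beal}$ is \emph{defined} to be $\rightarrow_{\overline{\be}}$, a $\beal$-redex is a $\overline{\be}$-redex, and their contracta coincide, so the two single-step relations are literally the same relation and the equivalence is trivial. For~\ref{betaandbetaoverareconnected2}, recall that $\rrightarrow_{\overline{\be}}$ is the reflexive transitive closure of $\rightarrow_{\overline{\be}}$ while $\rrightarrow_{\beal}$ is the reflexive transitive closure of $\rightarrow_{\beal}\cup\rightarrow_\alpha$; since $\rightarrow_{\overline{\be}} = \rightarrow_{\beal}\subseteq \rightarrow_{\beal}\cup\rightarrow_\alpha$, any $\overline{\be}$-reduction sequence is already a $\beal$-reduction sequence, so the inclusion holds by a one-line induction on the length of the reduction.

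The real work is item~\ref{betaandbetaoverareconnected3}: given a mixed sequence $A \rrightarrow_{\beal} B$ consisting of $\rightarrow_{\overline{\be}}$- and $\rightarrow_\alpha$-steps, I want to ``push all the $\alpha$-steps to the end'', producing $A \rrightarrow_{\overline{\be}} C \rrightarrow_\alpha B$. I would argue by induction on the length of the sequence. The inductive step reduces, after splitting off the last step, to the single commutation lemma: if $A \rightarrow_\alpha A' \rightarrow_{\overline{\be}} B'$, then there is $C$ with $A \rightarrow_{\overline{\be}} C \rrightarrow_\alpha B'$ (an $\alpha$-step followed by a $\overline{\be}$-step can be re-ordered to a $\overline{\be}$-step followed by $\alpha$-steps). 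Concretely, if $A' \rightarrow_{\overline{\be}} B'$ contracts a $\overline{\be}$-redex $R'$ occurring in $A'$, then since $A =_\alpha A'$, Lemma~\ref{redexesinsideforCR}.\ref{redexesinsideforCR3} supplies a $\overline{\be}$-redex $R$ in $A$ whose contraction yields some $C$ with $C =_\alpha B'$ — and here I use Lemma~\ref{betaandbetaoverareconnected}.\ref{betaandbetaoverareconnected1} to read ``$\beal$-redex'' as ``$\overline{\be}$-redex'' throughout, so the $C$ produced is exactly a $\overline{\be}$-contractum and $A \rightarrow_{\overline{\be}} C$. Then $C \rrightarrow_\alpha B'$ follows from $C =_\alpha B'$ via Lemma~\ref{lemmaalphatranslate}.\ref{rrightarrowalphasymmetric} (that $=_\alpha$ coincides with $\rrightarrow_\alpha$). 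Iterating this commutation across the whole sequence — each time trading an $\alpha$-step sitting to the left of a $\overline{\be}$-step for $\overline{\be}$ followed by a (possibly longer) block of $\alpha$-steps — bubbles every $\overline{\be}$-step to the front and leaves a trailing block of $\alpha$-steps, giving the desired factorisation $A \rrightarrow_{\overline{\be}} C \rrightarrow_\alpha B$.

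The main obstacle is getting the bookkeeping of the commutation right: when an $\alpha$-step precedes several $\overline{\be}$-steps, one pass of the lemma only commutes past the first $\overline{\be}$-step and may \emph{increase} the number of trailing $\alpha$-steps, so the induction measure must be chosen carefully — e.g.\ induct primarily on the number of $\overline{\be}$-steps appearing \emph{after} the first $\alpha$-step, which strictly decreases under each commutation. The clean way is to package the single-step commutation ``$\rightarrow_\alpha \cdot \rightarrow_{\overline{\be}} \;\subseteq\; \rightarrow_{\overline{\be}} \cdot \rrightarrow_\alpha$'' as a sublemma (exactly the content extracted from Lemma~\ref{redexesinsideforCR}.\ref{redexesinsideforCR3} together with the symmetry of $\rrightarrow_\alpha$), and then invoke the standard fact that such a local commutation, plus the fact that $\rrightarrow_\alpha$ is transitive, yields the global factorisation by induction on the length of the mixed sequence. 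All the genuinely $\alpha/\be$-combinatorial difficulty has already been discharged in Lemma~\ref{redexesinsideforCR}, so what remains is purely the reordering argument.
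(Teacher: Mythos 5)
Your proposal is correct and follows essentially the same route as the paper: items 1 and 2 are handled definitionally, and item 3 is obtained by the single-step commutation $\rightarrow_\alpha\cdot\rightarrow_{\overline{\be}}\subseteq\rightarrow_{\overline{\be}}\cdot\rrightarrow_\alpha$ extracted from Lemma~\ref{redexesinsideforCR}.\ref{redexesinsideforCR3} together with the coincidence of $=_\alpha$ and $\rrightarrow_\alpha$, then iterated to postpone all $\alpha$-steps. Your extra care about the induction measure (counting $\overline{\be}$-steps after the first $\alpha$-step) is a point the paper glosses over, but it is the same argument.
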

  \begin{proof}
    \ref{betaandbetaoverareconnected1}.\ is obvious since $({\beal})$ and $({\overline{\beta}})$ rules are the same.\\
    \ref{betaandbetaoverareconnected2}.\ is a corollary of 1.\\
    \ref{betaandbetaoverareconnected3}.\ By definition of $\rrightarrow_{\beal}$, $A \rrightarrow_{\beal} B$ is a sequence (possibly empty) of $\rightarrow_{\beal} $ and $\rightarrow_\alpha $ steps.  We will show that any $A_1 \rightarrow_\alpha A_2 \rightarrow_{\beal} A_3$ can be written as $A_1 \rightarrow_{\beal} A_4 \rrightarrow_\alpha A_3$  for some $A_4$.

    Since  $A_1 \rightarrow_\alpha A_2 \rightarrow_\be A_3$ then $A_1 =_\alpha A_2$ and $A_2 \rightarrow_{\beal} A_3$ (say by ${\beal}$-redex $R$).  By Lemma~\ref{redexesinsideforCR}.\ref{redexesinsideforCR3}, there is a ${\beal}$-redex $R'$ in $A_1$
    such that
    $A_1 \rightarrow_{\beal} A_4$ using ${\beal}$-redex $R'$ and $A_4 =_\alpha A_3$.  But $=_\alpha$ is the same as relation as $\rrightarrow_\alpha$ and therefore,
    $A_1 \rightarrow_{\beal} A_4 \rrightarrow_\alpha A_3$.

    This means all the $\rightarrow_\alpha$-steps 
    can be postponed till after all the $\rightarrow_{\beal}$-steps.  By what we just proved and \ref{betaandbetaoverareconnected1} above,
    if $A \rrightarrow_{\beal} B$ then there is $C$  such that $A\rrightarrow_{\overline{\beta}} C\rrightarrow_{\alpha} B$.
  \end{proof}

  \subsection{${\beall}$-reduction based on clean terms and $=_{\alpha'}$}
   In this section we define the $\beall$-reduction relation. The basic ($\beall$) rule first transforms the redex into a so-called clean term before any beta reduction takes place and uses grafting since clean terms are safe with grafting.   Here, the reflexive transitive closure $\rrightarrow_{\beall}$ also incorporates  $\alpha'$-reduction.  So, for grafting to work, we use   $\alpha'$-reduction at the basic reduction stage (to clean the term) and at the reflexive transitive stage.

  \begin{definition}
    \label{defclean}
    A $\la$-term $A$ is clean iff the following two conditions hold:
    \begin{itemize}
    \item
      $BV(A)\cap FV(A) = \emptyset$.
    \item
            For any $v$,  $\la v$ may occur at most once in $A$. 
      \end{itemize}
  \end{definition}
  So $\la x.\la y.(\la z.xz(yz))(\la y.yz)$ is not clean.  However, a clean version is 
  $\la x.\la y.(\la z'.xz'(yz'))(\la y'.y'z)$.
  Of course a term may have different clean versions.  E.g., $\la x'.\la y.(\la z'.x'z'(yz'))(\la y'.y'z)$ is also a (different)  clean version of  the term above.
  \begin{lemma}
    \label{cleantermlem}
    For any $A$, there is a clean $B$ such that $A\rrightarrow_{\alpha'} B$.
  \end{lemma}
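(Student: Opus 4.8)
The plan is to prove Lemma~\ref{cleantermlem} by induction on $\#A$, producing at each stage a clean term reachable by $\rrightarrow_{\alpha'}$. The base case $A =_{\cal M} v$ is immediate since a variable is already clean. For the application case $A =_{\cal M} BC$, I would first apply the induction hypothesis to obtain clean $B'$ with $B\rrightarrow_{\alpha'} B'$ and clean $C'$ with $C\rrightarrow_{\alpha'} C'$; then, using Lemma~\ref{lemmaalphatranslate}.\ref{lemmaalphatranslate3} (and that $\rrightarrow_{\alpha'}$ preserves $FV$ and is compatible, Lemma~\ref{lemmaalphatranslate}.\ref{lemmaalphatranslate0}), I would further $\alpha'$-rename the bound variables of $C'$ away from $BV(B')\cup FV(B'C')$, and symmetrically rename the bound variables of $B'$ away from $BV(C')\cup FV(B'C')$, obtaining $B''\rrightarrow_{\alpha'}$-equivalent to $B'$ and $C''$ to $C'$. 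Since $\rightarrow_{\alpha'}$ is compatible, $BC\rrightarrow_{\alpha'} B''C''$, and by construction $BV(B''C'') = BV(B'')\cup BV(C'')$ is disjoint from $FV(B''C'')$, while no $\la v$ occurs twice (each of $B'', C''$ is clean internally and their bound-variable sets are disjoint). The $\la$-abstraction case $A =_{\cal M} \la v.B$ is the delicate one: by IH get clean $B'$ with $B\rrightarrow_{\alpha'} B'$, then rename $v$ to a fresh $v'\notin FV(vB')\cup BV(B')$, so that $\la v.B\rrightarrow_{\alpha'}\la v.B'\rightarrow_{\alpha'}\la v'.B'\{v:=v'\}$ is a legal $(\alpha')$ step, and finally argue $\la v'.B'\{v:=v'\}$ is clean.

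The main obstacle is verifying cleanness of $\la v'.B'\{v:=v'\}$ in the abstraction case. Two things must be checked: that $B'\{v:=v'\}$ still has the ``$\la w$ occurs at most once'' property, and that $BV(\la v'.B'\{v:=v'\})\cap FV(\la v'.B'\{v:=v'\}) = \emptyset$. For the first, grafting does not introduce or delete any $\lambda$-binders (it only rewrites variable leaves), so the multiset of binders of $B'\{v:=v'\}$ equals that of $B'$, which is clean; and $v'$ is fresh, not among $BV(B')$, so adding the outer $\la v'$ keeps binders distinct. For the second, $BV(B'\{v:=v'\}) = BV(B')$ (same reason), and $FV(B'\{v:=v'\})\subseteq (FV(B')\setminus\{v\})\cup\{v'\}$; since $B'$ is clean, $BV(B')\cap FV(B') = \emptyset$, and since $v'\notin BV(B')$ by choice, we get $BV(B')\cap FV(B'\{v:=v'\}) = \emptyset$; finally $v'\notin BV(B')$ so $BV(\la v'.B'\{v:=v'\})\cap FV(\la v'.B'\{v:=v'\}) = \emptyset$ as well. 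One should also double-check that the $(\alpha')$ step $\la v.B'\rightarrow_{\alpha'}\la v'.B'\{v:=v'\}$ is licensed, i.e.\ that $v'\notin FV(vB')$ and $v,v'\notin BV(B')$: the first holds by the freshness choice of $v'$, and $v'\notin BV(B')$ likewise, while $v\notin BV(B')$ would need to be arranged — so I would strengthen the IH, or rather first apply Lemma~\ref{lemmaalphatranslate}.\ref{lemmaalphatranslate3} to $B'$ to push $v$ (and any troublesome variable) out of $BV$, which is harmless since it preserves cleanness and $FV$.

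Concretely, in each inductive case I would invoke Lemma~\ref{lemmaalphatranslate}.\ref{lemmaalphatranslate3} as a normalisation step to guarantee that the bound variables of the pieces avoid whatever finite set of ``dangerous'' variables arises (the free variables of the whole term, the bound variables of sibling subterms, the variable $v$ being abstracted), and then the cleanness bookkeeping reduces to the two elementary facts that grafting $\{v:=v'\}$ leaves $BV$ unchanged and changes $FV$ only by possibly removing $v$ and adding $v'$. Transitivity of $\rrightarrow_{\alpha'}$ and its compatibility (implicit in Definition~\ref{syntactiequidef}, since $\rightarrow_{\alpha'}$ is a compatible closure and its reflexive transitive closure therefore respects contexts) let me assemble the local renamings into a single reduction $A\rrightarrow_{\alpha'} B$ with $B$ clean. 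I expect the whole argument to be a routine but slightly fiddly structural induction, with the only real care needed in tracking which finite sets of variables must be avoided at the abstraction step.
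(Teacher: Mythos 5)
Your induction-on-the-structure-of-$A$ argument is viable, but it is a genuinely different route from the paper's. The paper does not induct on $A$ at all: it first invokes Lemma~\ref{lemmaalphatranslate}.\ref{lemmaalphatranslate3} once, globally, to obtain $A'$ with $BV(A')\cap(FV(A)\cup BV(A))=\emptyset$ (so $BV(A')\cap FV(A')=\emptyset$ by preservation of free variables), and then performs a single right-to-left sweep over the $n$ occurrences of $\la$ in $A'$, renaming each binder in its context $C_k[\la v_k.B_k]$ to a variable fresh for the \emph{entire} current term; after the sweep every binder occurs once and avoids the free variables, so the result is clean. Your version localises the freshness bookkeeping to each constructor (disjointifying the binder sets of the two halves of an application, and refreshing the outermost binder of an abstraction), at the cost of having to re-establish cleanness of the subresults after each auxiliary renaming; the paper's version pays instead with the slightly informal ``repeat the above process'' iteration over contexts. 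Your verification of the abstraction case (grafting $\{v:=v'\}$ leaves $BV$ unchanged, shrinks $FV$ by at most $\{v\}$ and adds at most $\{v'\}$) is exactly the right computation, and your observation that $v\notin BV(B')$ must be arranged before the $(\alpha')$ step is a real side condition that the paper's global first phase discharges once and for all.

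The one point you should not wave through is the claim that applying Lemma~\ref{lemmaalphatranslate}.\ref{lemmaalphatranslate3} to an already clean term is ``harmless since it preserves cleanness.'' As stated, that lemma only guarantees $A\rrightarrow_{\alpha'}A'$ with $BV(A')$ avoiding a prescribed finite set; it says nothing about the binders of $A'$ being pairwise distinct or disjoint from $FV(A')$, so cleanness of the output is not a consequence of the statement you are citing (only $FV$-preservation is, via Lemma~\ref{lemmaalphatranslate}.\ref{lemmaalphatranslate0}). You need either to strengthen that renaming lemma (its proof does in fact choose each new binder outside $BV$ of the current term, so distinctness can be extracted) or to do the renaming by hand as the paper does in its second phase. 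With that repaired, your induction goes through.
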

  \begin{proof}
    By Lemma~\ref{lemmaalphatranslate}.\ref{lemmaalphatranslate3}, we can find $A'$ such that $A\rrightarrow_{\alpha'} A'$ and $BV(A')\cap (FV(A)\cup BV(A))    = \emptyset$. By Lemma~\ref{lemmaalphatranslate}.\ref{lemmaalphatranslate0}, $FV(A) = FV(A')$. Hence,  $BV(A')\cap FV(A')   = \emptyset$.  If $BV(A') = \emptyset$, we are done. Take $B$ to be $A'$. 
    
    Else, assume that $A'$ has $n$ $\la$s $\la v_1, \la v_2, \cdots, \la v_n$ (note that some of the $v_i$s may be equal)
    occurring from left to right in that order.    
    Then
    $A' =_{\cal M} C_n[\la v_n.B_n]$ for some context $C_n[\:]$ and some term $B_n$. Let $v'_n\not\in FV(A')\cup BV(A')$.  Hence,  $v'_n\not\in FV(v_nB_n)\cup BV(B_n)$ and $A'\rightarrow_{\alpha'}  C_n[\la v'_n.B_n\{v_n:=v'_n\}] =_{\cal M} A'_n$.  Note that
    $FV(A') = FV(A'_n)$, $v'_n\not\in FV(A'_n)$ and $A'_n$ has $n$ $\la$s $\la v_1, \la v_2, \cdots,\la v_{n-1}, \la v'_n$ 
    occurring from left to right in that order, where $\la v'_n$ occurs  once in $A'_n$ and $\{v_1, v_2, \cdots, v_{n-1},\break  v'_n\}\cap FV(A'_n) = \emptyset$. \\
    Repeat the above process for $\la v_{n-1}$ where   $A'_n =_{\cal M} C_{n-1}[\la v_{n-1}.B_{n-1}]$ for some context $C_{n-1}$ and term $B_{n-1}$. Let $v'_{n-1}\not\in FV(A'_n)\cup BV(A'_n)$.  
    Hence,  $v'_{n-1}\not\in FV(v_{n-1}B_{n-1})\cup BV(B_{n-1})$ and $A'_n\rightarrow_{\alpha'}  C_{n-1}\break [\la v'_{n-1}.B_{n-1}\{v_{n-1}:=v'_{n-1}\}] =_{\cal M} A'_{n-1}$.  Note that
    $FV(A'_n) =\break FV(A'_{n-1})$, $v'_{n-1}\not\in FV(A'_{n-1})$ and $A'_{n-1}$ has $n$ $\la$s $\la v_1$, $\la v_2$,  $\cdots$, $\la v_{n-2}$, $\la v'_{n-1}, \la v'_n$ 
    occurring from left to right in that order, where $v_n \not = v_{n-1}$ and each of $\la v'_n$ and $\la v'_{n-1}$ occurs once in $A'_{n-1}$ and $\{v_1, v_2, \cdots, v_{n-2},\break  v'_{n-1},  v'_n\}\cap FV(A'_{n-1}) = \emptyset$. \\
    Like we constructed $A'_n$, $A'_{n-1}$, we continue to construct $A'_{n-2}$, $\cdots$, $A'_2$, $A'_1$ such that $A' \rightarrow_{\alpha'} A'_n  \rightarrow_{\alpha'} A'_{n-1}  \rightarrow_{\alpha'} A'_{n-2} \cdots  \rightarrow_{\alpha'} A'_1$.\\  $A'_1$ is the clean term we are after.
  \end{proof}
  \begin{lemma}
    \label{cleangraftsubs}
    If $(\la v.A)B$ is clean then $A\{v:=B\} =_{\cal M} A\langle\langle v':=B\rangle\rangle$.
  \end{lemma}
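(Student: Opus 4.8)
The plan is to prove the stated identity (reading its right-hand side as $A\langle\langle v:=B\rangle\rangle$) by first isolating, from the cleanness of $(\la v.A)B$, the two structural facts that force grafting and replacement to agree, and then running a routine induction on $A$.

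\emph{Step 1: unpack cleanness.} Since $\la v$ occurs (once) at the head of $(\la v.A)B$, the ``at most once'' clause of Definition~\ref{defclean} forbids $\la v$ from occurring inside $A$, i.e.\ $v\notin BV(A)$. And for every $\la w.D$ occurring in $A$, the variable $w$ lies in $BV((\la v.A)B)$, so by the first clause of cleanness $w\notin FV((\la v.A)B)\supseteq FV(B)$; hence no bound variable of $A$ is free in $B$. I would then prove the (formally slightly more general) claim: if $v\notin BV(A)$ and every $\la w.D$ occurring in $A$ satisfies $w\notin FV(B)$, then $A\{v:=B\} =_{\cal M} A\langle\langle v:=B\rangle\rangle$. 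The lemma is the special case just derived.

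\emph{Step 2: induction on $A$.} If $A$ is $v$ or a variable other than $v$, then clauses 1 and 2 of Definitions~\ref{graftingdef} and~\ref{orderedsubstdef} are literally identical, so there is nothing to do. If $A\equiv CD$, the two side-conditions restrict to $C$ and to $D$ (a binder occurring in $C$ or in $D$ occurs in $CD$, and $BV(C),BV(D)\subseteq BV(CD)$), so the induction hypothesis, clause 3 of both definitions, and compatibility of $=_{\cal M}$ suffice. The only case with substance is $A\equiv\la u.C$: here $u\in BV(A)$, hence $u\neq v$ (as $v\notin BV(A)$), so clause 5 of grafting gives $(\la u.C)\{v:=B\} =_{\cal M}\la u.(C\{v:=B\})$; on the replacement side, clause 6 of Definition~\ref{orderedsubstdef} could only fire if $u\in FV(B)$, but $\la u.C$ occurs in $A$, so the hypothesis gives $u\notin FV(B)$ and clause 5 applies there too, giving $(\la u.C)\langle\langle v:=B\rangle\rangle =_{\cal M}\la u.(C\langle\langle v:=B\rangle\rangle)$; the side-conditions restrict to $C$, and the induction hypothesis together with compatibility closes the case.

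I do not anticipate a real obstacle: the only place where grafting and replacement can diverge is the abstraction case through clause 6 of Definition~\ref{orderedsubstdef}, and cleanness is engineered precisely so that clause 6 (and clause 4 of grafting, since $v\notin BV(A)$) is never triggered. The one point needing care is the bookkeeping in Step 1 and the inheritance of the two side-conditions by the immediate subterms in the application and abstraction cases --- which is exactly why I would phrase the induction on the generalized hypothesis rather than directly on ``$(\la v.A)B$ is clean'' (cleanness is in fact hereditary, but the generalized form makes this descent transparent).
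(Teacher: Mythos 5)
Your proposal is correct and follows the same route as the paper, whose entire proof is the single line ``By induction on $A$''; you have simply supplied the details that the paper leaves implicit, including the sensible reading of the statement's $v'$ as $v$. The explicit strengthening of the induction hypothesis (to $v\notin BV(A)$ and no binder of $A$ free in $B$) is exactly the right way to make that induction go through, so there is nothing to object to.
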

  \begin{proof}
    By induction on $A$.
  \end{proof}
  Based on Lemmas~\ref{cleantermlem} and~\ref{cleangraftsubs}, we can define beta reduction based on
  clean terms and $\alpha'$-reduction as follows:
  \begin{definition}  \label{betasecondclean}
We define $\rightarrow_{\beall}$ as the least compatible relation closed under:
\[ \begin{array}{lll}
  ({\beall}) \hspace{0.1in} &(\la{v}.A)B \rightarrow_{\beall} A'\{ v':=B'\} \\
  &  \mbox{where $(\la v.A)B \rrightarrow_{\alpha'} (\la v'.A')B'$ and $(\la v'.A')B'$ is clean.} 
  \end{array}\]
We call this reduction relation ${\beall}$-reduction.
We define $\rrightarrow_{\beall}$  as the reflexive transitive closure of $\rightarrow_{\beall}\cup \rightarrow_{\alpha'}$.\footnote{Note that in the above definition of $\rrightarrow_{\beall}$, the reflexive transitive closure of $\rightarrow_{\beall}\cup \rightarrow_{\alpha'}$ is necessary since otherwise,
  if we take  $\rrightarrow_{\beall}$  as the reflexive transitive closure of $\rightarrow_{\beall}$ then we would lose CR.  For example, $(\la x.(\la x.x)x)x\rightarrow_{\beall} x(\la y.y)$ and
  $(\la x.(\la x.x)x)x\rightarrow_{\beall} x(\la z.z)$ but in the absence of $\rightarrow_{\alpha'}$ we can never show that $ x(\la y.y)$ and $ x(\la z.z)$ ${\beall}$-reduce to a common term (say $x(\la u.u)$).}
  \end{definition}
  \begin{lemma}
    \label{detasecondsameasbetap}
    \begin{enumerate}
    \item
      If $A \rightarrow_{\beall} B$ then $A \rrightarrow_{\beal} B$.
    \item
        If $A \rightarrow_{\beal} B$ then $A \rrightarrow_{\beall} B$.
      \end{enumerate}
  \end{lemma}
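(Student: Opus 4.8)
The plan is to prove the two inclusions separately, in each case unfolding the relevant compatible closure down to a single redex-occurrence inside a context $C[\:]$ and then invoking the already-established bridges between grafting $A\{v:=B\}$, ordered-variable replacement $A\langle\langle v:=B\rangle\rangle$, and the equivalences $=_\alpha$ and $=_{\alpha'}$ (which by Lemma~\ref{lemmaalphatranslate}.\ref{rrightarrowalphasamealphap} are one and the same relation $\equiv$). I will use freely that $\rightarrow_\alpha$, $\rightarrow_{\alpha'}$, $\rightarrow_{\beal}$, $\rightarrow_{\beall}$ are compatible by definition, hence so are $\rrightarrow_\alpha$, $\rrightarrow_{\alpha'}$, $\rrightarrow_{\beal}$, $\rrightarrow_{\beall}$, so that every relation established between subterms can be transported through $C[\:]$.

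For the first part, suppose $A\rightarrow_{\beall}B$. By the shape of rule $({\beall})$ and its compatible closure, $A=C[(\la v.A_0)B_0]$ and $B=C[A_0'\{v':=B_0'\}]$ where $(\la v.A_0)B_0\rrightarrow_{\alpha'}(\la v'.A_0')B_0'$ and $(\la v'.A_0')B_0'$ is clean. Since $\rrightarrow_{\alpha'}\subseteq\rrightarrow_\alpha$ (Lemma~\ref{lemmaalphatranslate}.\ref{lemmaalphatranslate2}) and $\rrightarrow_\alpha$ is compatible, $A\rrightarrow_\alpha C[(\la v'.A_0')B_0']$. Cleanness lets Lemma~\ref{cleangraftsubs} identify the grafting with the replacement, $A_0'\{v':=B_0'\}=_{\cal M}A_0'\langle\langle v':=B_0'\rangle\rangle$; and since $({\beal})$ is literally the $({\overline{\beta}})$ rule (Lemma~\ref{betaandbetaoverareconnected}.\ref{betaandbetaoverareconnected1}), $(\la v'.A_0')B_0'\rightarrow_{\beal}A_0'\langle\langle v':=B_0'\rangle\rangle=_{\cal M}A_0'\{v':=B_0'\}$, so by compatibility $C[(\la v'.A_0')B_0']\rightarrow_{\beal}B$. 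Appending this single $\beal$-step to the preceding $\alpha$-steps inside the definition of $\rrightarrow_{\beal}$ gives $A\rrightarrow_{\beal}B$.

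For the second part, suppose $A\rightarrow_{\beal}B$, i.e.\ $A=C[(\la v.A_0)B_0]$ and $B=C[A_0\langle\langle v:=B_0\rangle\rangle]$. First I would apply Lemma~\ref{cleantermlem} to the redex $(\la v.A_0)B_0$ to get a clean term to which it $\alpha'$-reduces; each $\alpha'$-step preserves the outermost application shape (Lemma~\ref{lemmaalphatranslate}.\ref{lemmaalphatranslate0}), so that clean term is again a redex $(\la v'.A_0')B_0'$, and rule $({\beall})$ fires: $(\la v.A_0)B_0\rightarrow_{\beall}A_0'\{v':=B_0'\}$, whence $A\rightarrow_{\beall}C[A_0'\{v':=B_0'\}]$. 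The remaining and only non-bookkeeping task is to get $A_0'\{v':=B_0'\}\rrightarrow_{\alpha'}A_0\langle\langle v:=B_0\rangle\rangle$, i.e.\ to cross back from the clean/grafting side to the ordered-list side. For this: from $(\la v.A_0)B_0\rrightarrow_{\alpha'}(\la v'.A_0')B_0'$ we get $(\la v.A_0)B_0=_\alpha(\la v'.A_0')B_0'$ (Lemma~\ref{lemmaalphatranslate}.\ref{lemmaalphatranslate2}), so Lemma~\ref{redexesinsideforCR}.\ref{redexesinsideforCR2} gives $A_0\langle\langle v:=B_0\rangle\rangle=_\alpha A_0'\langle\langle v':=B_0'\rangle\rangle$; combined with $A_0'\{v':=B_0'\}=_{\cal M}A_0'\langle\langle v':=B_0'\rangle\rangle$ from Lemma~\ref{cleangraftsubs}, this yields $A_0'\{v':=B_0'\}=_\alpha A_0\langle\langle v:=B_0\rangle\rangle$, hence $=_{\alpha'}$ (Lemma~\ref{lemmaalphatranslate}.\ref{rrightarrowalphasamealphap}), hence $\rrightarrow_{\alpha'}$ because $\rrightarrow_{\alpha'}$ is symmetric and therefore equals $=_{\alpha'}$ (Lemma~\ref{alpha0issymmetric}). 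Transporting through $C[\:]$ and using $\rrightarrow_{\alpha'}\subseteq\rrightarrow_{\beall}$ completes $A\rrightarrow_{\beall}B$.

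The main obstacle I anticipate is not mathematical depth but keeping the bookkeeping exact: at each step one must be clear whether one is asserting $=_{\cal M}$, $=_\alpha$, $=_{\alpha'}$, or $\rrightarrow_{\alpha'}$, and one must check that every statement proved for the redex-occurrence really does lift through an arbitrary context, which is where compatibility of all the closures is used. The genuinely content-bearing move is the round trip in the second part from $A_0'\{v':=B_0'\}$ back to $A_0\langle\langle v:=B_0\rangle\rangle$, and that difficulty has already been discharged by Lemmas~\ref{cleantermlem},~\ref{cleangraftsubs} and~\ref{redexesinsideforCR}.\ref{redexesinsideforCR2}, so the present proof should reduce to assembling them in the right order.
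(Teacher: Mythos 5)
Your proof is correct and follows essentially the same route as the paper's: part~1 via Lemma~\ref{cleangraftsubs} and Lemma~\ref{lemmaalphatranslate}, and part~2 by cleaning the redex with Lemma~\ref{cleantermlem}, firing $(\beall)$, and crossing back via Lemma~\ref{redexesinsideforCR}.\ref{redexesinsideforCR2}, Lemma~\ref{cleangraftsubs} and the symmetry of $\rrightarrow_{\alpha'}$ (Lemma~\ref{alpha0issymmetric}). The only cosmetic difference is that you handle the compatible closure by context decomposition where the paper phrases it as induction on the derivation of the reduction step; these are equivalent.
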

  \begin{proof}
    1.\ By induction on $A \rightarrow_{\beall} B$. For the case $(\la{v}.A)B \rightarrow_{\beall} A'\{ v':=B'\}$ 
    where $(\la v.A)B \rrightarrow_{\alpha'} (\la v'.A')B'$ and $(\la v'.A')B'$ is clean, use Lemmas~\ref{cleangraftsubs} and~\ref{lemmaalphatranslate}.\\
    2.\ \ By induction on $A \rightarrow_{\beal} B$. We do the case $(\la{v}.A)B \rightarrow_{\beal} A\langle\langle v:=B\rangle\rangle$.
    By Lemmas~\ref{cleantermlem} and~\ref{cleangraftsubs}, for a clean $(\la v'.A')B'$,
    $(\la{v}.A)B \rightarrow_{\alpha'} (\la v'.A')B'$ and 
    $(\la{v}.A) B \rightarrow_{\beall} A'\langle\langle v':=B'\rangle\rangle$.
    By Lemma~\ref{redexesinsideforCR}.\ref{redexesinsideforCR2},
    since    $(\la{v}.A)B =_\alpha (\la{v'}.A')B'$ then $A\langle\langle v:=B\rangle\rangle =_{\cal M}\Gamma_{\beal}[(\la{v}.A)B] =_\alpha \Gamma_{\beal}[(\la{v'}.A')B']=_{\cal M} A'\langle\langle v':=B'\rangle\rangle$. Hence by Lemma~\ref{lemmaalphatranslate}, $A'\langle\langle v':=B'\rangle\rangle \rrightarrow_{\alpha'}  A\langle\langle v:=B\rangle\rangle$ and so, $(\la{v}.A)B \rrightarrow_{\beall} A\langle\langle v:=B\rangle\rangle$.
  \end{proof}
  Since ${\beal}$ satisfies CR,
  by Lemma~\ref{detasecondsameasbetap}, also ${\beall}$ satisfies CR.

   However, there is an oddity about $\beall$ compared to $\beal$. The latter is a function whereas the former is not.  
   If $A \rightarrow_{\beal} B$ then $B$ is unique because the replacement relation for $\beal$ is based on an ordered variable list.  This is not the case for $\rightarrow_{\beall}$ where for example,  $(\la x.x(\la x.x))x  \rightarrow_{\beall} x(\la y.y)$ and $(\la x.x(\la x.x))x  \rightarrow_{\beall} x(\la z.z)$ but  $x(\la y.y)\not= x(\la z.z)$. 
   
\subsection{Equating terms modulo $\equiv$}
  \label{secsubst}
  Now that syntactic equivalence $\equiv$ is defined (and is the same relation as $\alpha$-congruence $=_{\alpha}$ set up in terms of replacement using an ordered variable list and  $\alpha'$-congruence $=_{\alpha'}$ set up in terms of grafting), we could
  identify terms modulo syntactic equivalence $\equiv$, and then define substitution as a refined form of both replacement using an ordered variable list and grafting.  For this, all we need to do is to  use  $\equiv$ instead $=_{\cal M}$ in Definition~\ref{orderedsubstdef}
  and to remove the statement ``and $v''$ is the first variable in the ordered list ${\cal V}$'' from clause 6. This gives us the following definition:
 \begin{definition}[Substitution  \mbox{$A [ v:=B]$}, using $\equiv$]
\label{alphasubstdef}
For any $A, B, v$, we define $A{[v:=B] }$ to be the result of substituting $B$ for every free occurrence of $v$ in $A$, as follows:
\[
\begin{array}{llll}
1. \: v{[ v:=B]} & \equiv & B  \\
2. \:  v'{[ v:=B]} & \equiv & v'  \hspace{0.5in} \mbox{if $v \not = v'$}\\
3. \: (A C){[ v:=B]} & \equiv  & A{[ v:=B]} C{[ v:=B]}  \\
4. \: (\la{v}.{A}){[ v:=B]} & \equiv & \la{v}.{A}  \\
5. \: (\la{v'}.{A}){[v:=B]} & \equiv & \la{v'}.{A{[ v:=B]}}    \hspace{0.5in} \mbox{if $v \not = v'$}\\
 & & \mbox{and ($v' \not \in FV(B)$ or $v \not \in FV(A)$)}\\
6. \: (\la{v'}.{A}){[ v:=B]} & \equiv & \la{v''}.{A{[ v':=v'']}}{[ v:=B]}  \hspace{0.3in} \mbox{if $v \not = v'$} \\
 & & \mbox{and ($v' \in FV(B)$ and  $v \in FV(A)$)} \\
  & & \mbox{such that $v''\not\in FV(AB)$.}
\end{array}
\]
 \end{definition}
 Recall Lemma~\ref{eqalphalemmaden}.\ref{lemmaalphapreservesfreevars2} which implies that if $A\equiv B$ and $C\equiv D$ then $A\langle\langle v:=C \rangle\rangle \equiv B\langle\langle v:=D\rangle\rangle$.  
Using $\equiv$ instead of $=_{\cal M}$ means that we can also move from $A\langle\langle v:=B\rangle\rangle$ to $A[v:=B]$ as seen by the following lemma which is proven by induction on $\#A$.
  \begin{lemma}
   \label{substandrepla}
     For any $v$, $A$ and $B$, $A[v:=B]\equiv A\langle\langle v:=B\rangle\rangle$.
   \end{lemma}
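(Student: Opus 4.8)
I would prove $A[v:=B]\equiv A\langle\langle v:=B\rangle\rangle$ by induction on $\#A$, running through the clause structure shared by Definitions~\ref{orderedsubstdef} and~\ref{alphasubstdef}. Throughout I would use that $=_{\cal M}$ is contained in $\equiv$, that $\equiv$ is compatible (recall Definition~\ref{equivdef}; $=_\alpha$, being the equivalence closure of the compatible relation $\rightarrow_\alpha$ of Definition~\ref{alphared}, is itself compatible), and Lemma~\ref{eqalphalemmaden}.\ref{lemmaalphapreservesfreevars2}, which gives closure of $\equiv$ under the replacement of Definition~\ref{orderedsubstdef}. When $A$ is a variable the statement is immediate. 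When $A$ is an application, or $A=\la v.A'$, or $A=\la v'.A'$ with $v\neq v'$ satisfying the side condition of clause~5, the two notions take the same outer form and the recursive calls are on strictly shorter terms, so the induction hypothesis applies directly and compatibility of $\equiv$ closes the case.

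The only case with genuine content is clause~6: $A=\la v'.A'$ with $v\neq v'$, $v'\in FV(B)$ and $v\in FV(A')$. Here $A[v:=B]\equiv\la v''.A'[v':=v''][v:=B]$ for an \emph{arbitrary} $v''\notin FV(A'B)$, whereas $A\langle\langle v:=B\rangle\rangle=_{\cal M}\la w.A'\langle\langle v':=w\rangle\rangle\langle\langle v:=B\rangle\rangle$ with $w$ the \emph{first} variable not in $FV(A'B)$; note $v\notin\{v'',w\}$ (since $v\in FV(A')$) and $v'\notin\{v'',w\}$ (since $v'\in FV(B)$). I would first deal with the body. By the induction hypothesis applied to $A'$, every value of $A'[v':=v'']$ is $\equiv A'\langle\langle v':=v''\rangle\rangle$. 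Since $\equiv$ preserves length and $\#(A'\langle\langle v':=v''\rangle\rangle)=\#A'<\#A$ --- variable-for-variable replacement preserves length, as $\la v'.A'\rightarrow_\alpha\la v''.A'\langle\langle v':=v''\rangle\rangle$ gives $\la v'.A'\rrightarrow_{\alpha'}\la v''.A'\langle\langle v':=v''\rangle\rangle$ by Lemma~\ref{lemmaalphatranslate}.\ref{lemmaalphatranslate5} and $\rrightarrow_{\alpha'}$ preserves length by Lemma~\ref{lemmaalphatranslate}.\ref{lemmaalphatranslate0} --- the induction hypothesis applies once more; combined with Lemma~\ref{eqalphalemmaden}.\ref{lemmaalphapreservesfreevars2} this gives $A'[v':=v''][v:=B]\equiv A'\langle\langle v':=v''\rangle\rangle\langle\langle v:=B\rangle\rangle$.

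It remains to re-attach the $\la$ and absorb the freedom in the choice of $v''$. Since $v'',w\notin FV(A')$ we have $\la v'.A'\rightarrow_\alpha\la v''.A'\langle\langle v':=v''\rangle\rangle$ and $\la v'.A'\rightarrow_\alpha\la w.A'\langle\langle v':=w\rangle\rangle$, so by symmetry and transitivity of $\equiv$, $\la v''.A'\langle\langle v':=v''\rangle\rangle\equiv\la w.A'\langle\langle v':=w\rangle\rangle$. Because $v'',w\notin FV(B)$ and $v\notin\{v'',w\}$, clause~5 of Definition~\ref{orderedsubstdef} gives $(\la v''.A'\langle\langle v':=v''\rangle\rangle)\langle\langle v:=B\rangle\rangle=_{\cal M}\la v''.A'\langle\langle v':=v''\rangle\rangle\langle\langle v:=B\rangle\rangle$ and the analogous identity with $w$ in place of $v''$; hence Lemma~\ref{eqalphalemmaden}.\ref{lemmaalphapreservesfreevars2} applied to the previous $\equiv$ yields $\la v''.A'\langle\langle v':=v''\rangle\rangle\langle\langle v:=B\rangle\rangle\equiv\la w.A'\langle\langle v':=w\rangle\rangle\langle\langle v:=B\rangle\rangle=_{\cal M}A\langle\langle v:=B\rangle\rangle$. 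Chaining this with the body computation and compatibility of $\equiv$ (to put $\la v''$ back on) finishes clause~6, and with it the lemma. The main obstacle is exactly this clause: one has to carry two distinct candidate bound variables through the argument, verify the side conditions that collapse the composed replacement into one abstraction, and keep re-establishing length-preservation so that the induction hypothesis stays applicable. Every other case is a one-line appeal to the induction hypothesis.
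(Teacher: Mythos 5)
Your proof is correct and follows exactly the route the paper indicates: the paper's entire proof is the single phrase ``proven by induction on $\#A$'', and your argument is that induction carried out in full, with the only substantive work (clause~6, where the arbitrary $v''$ of Definition~\ref{alphasubstdef} must be reconciled with the ordered-list choice of Definition~\ref{orderedsubstdef}) handled correctly via two $\alpha$-steps from $\la v'.A'$, clause~5 to push $\langle\langle v:=B\rangle\rangle$ under the fresh binders, and Lemma~\ref{eqalphalemmaden}.\ref{lemmaalphapreservesfreevars2}. The care you take with length preservation so that the induction hypothesis can be applied a second time to $A'\langle\langle v':=v''\rangle\rangle$ is exactly the detail the paper leaves implicit.
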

  We could even go one step further in our use of $=_\alpha$ and use the so-called variable convention~\cite{Bar82}  where we assume that no variable name is both free and bound within the same term.\footnote{Clean terms satisfy the variable convention, but the other way round does not necessarily hold.}  So, we will never have terms like $(\la v.B)[v:=C]$ or  $(\la v.B)v$.  Instead, the $\la v$ would be changed to a $\la v'$ where $v\not = v'$.
  This way, clauses 4 and 6 of Definition~\ref{alphasubstdef} do not hold and we know that in clause 5,  $v \not = v'$
 and ($v' \not \in FV(B)$ or $v \not \in FV(A)$) always hold.  
  Hence, if we always write terms following the variable convention,  can define substitution as:
   \begin{definition}
\label{alphasubstdefnew}
For any $A, B, v$, we define $A{[v:=B] }$ to be the result of substituting $B$ for every free occurrence of $v$ in $A$, as follows:
\[
\begin{array}{llll}
1. \: v{[ v:=B]} & \equiv & B  \\
2. \:  v'{[ v:=B]} & \equiv  & v'  \hspace{0.5in} \mbox{if $v \not = v'$}\\
3. \: (A C){[ v:=B]} & \equiv  & A{[ v:=B]} C{[ v:=B]}  \\
5. \: (\la{v'}.{A}){[v:=B]} & \equiv  & \la{v'}.{A{[ v:=B]}} \\
\end{array}
\]
\end{definition}

   So here, we always use terms modulo names of bound variables and use $\equiv$ (which is the same as $=_{\alpha}$ and $=_{\alpha'}$).  In this case, we could use the replacement given in Definition~\ref{alphasubstdef}.  If we also want our terms to be written according to the variable convention (where the names of bound variables are always different from the free ones) then instead of Definition~\ref{alphasubstdef}, we can use Definition~\ref{alphasubstdefnew}. Whicever style we take here (always modulo bound variable names, with or without variable convention), we define $\be$-reduction as follows:
   \begin{definition}  \label{betaredwithalpha}
We define $\rightarrow_{\beta}$ as the least compatible relation closed under:
\[ ({\beta}) \hspace{0.5in} (\la{v}.A)B \rightarrow_{\beta} A[ v:=B] \hspace{0.5in} \]
We call this reduction relation $\beta$-reduction.
We define $\rrightarrow_{\beta}$  as the reflexive transitive closure of $\rightarrow_{\beta}$.
\end{definition}
   Let $r\in\{{\overline{\be}}, {\beal}, {\beall}\}$.
   It is easy to show that $\rightarrow_r\subseteq \rightarrow_\be$ and that, if $A\rightarrow_\be B$ then there are $A'$, $B'$ such that  $A'\equiv A$, $B'\equiv B$, and $A'\rightarrow_r B'$.

   \medskip

   We end this section by summarising the substitutions and $\be$-reduc\-tions introduced in this paper.
   The basic axioms introduced are:
 \begin{itemize}
 \item
    $({\be^w})\hspace{0.1in} (\la v.A)B \rightarrow_{\be^w} A\{v:=B\}$.
 \item
   $(\alpha) \hspace{0.1in} \la{v}.A \rightarrow_\alpha \la{v'}.A\langle\langle v:=v'\rangle\rangle  \hspace{0.1in} \mbox{where $v' \not \in FV(A)$}$.
 \item
   $ ({\alpha'}) \hspace{0.05in}  \la v.A \rightarrow_{\alpha'} \la v'.A\{v:=v'\} \hspace{0.03in}\mbox{ if  $v'\not\in FV(vA)$ and $ v,v'\not\in BV(A). $}$
 \item
   $ ({\overline{\beta}}) \hspace{0.1in} (\la{v}.A)B \rightarrow_{\overline{\beta}} A\langle\langle v:=B\rangle\rangle.$
\item
  $ ({\beal}) \hspace{0.1in} (\la{v}.A)B \rightarrow_{\beal} A\langle\langle v:=B\rangle\rangle.$
\item
  $ \begin{array}{lll}
  ({\beall}) \hspace{0.1in} &(\la{v}.A)B \rightarrow_{\beall} A'\{ v':=B'\} \\
  &  \mbox{where $(\la v.A)B \rrightarrow_{\alpha'} (\la v'.A')B'$ and $(\la v'.A')B'$ is clean.}
  \end{array}$
\item
     $({\be})\hspace{0.1in} (\la v.A)B \rightarrow_{\be} A[v:=B]$.
 \end{itemize}
The reduction relations based on these axioms are:
 \begin{itemize}
 \item
   For each $r\in\{{\be^w}, \alpha, \alpha', \overline{\be}, {\beal}, {\beall}, \be\}$:
   \begin{itemize}
   \item
     We define  $\rightarrow_{r}$ as the compatible closure of $(r)$ and
     call this reduction relation $r$-reduction.
        \item
     We call the term on the left (resp.\ right) of  $\rightarrow_r$ in $(r)$, an $r$-redex (resp.\ an $r$-contractum).
   \item
     If $R$ is an $r$-redex, we write $\Gamma_r[R]$ for its  $r$-contractum.
   \end{itemize}
     \item
         For each $r\in\{{\be^w}, \alpha, \alpha', \overline{\be}, \be\}$, we define $\rrightarrow_r$ (resp.\ $=_r$) as the reflexive transitive (resp.\ equivalence) closure of $\rightarrow_r$.
       \item
         We define $\rrightarrow_{\beal}$ as the reflexive transitive (resp.\ equivalence) closure of $\rightarrow_{\beal}\cup\rightarrow_\alpha$.
       \item
         We define $\rrightarrow_{\beall}$  as the reflexive transitive closure of $\rightarrow_{\beall}\cup \rightarrow_{\alpha'}$.
     \end{itemize}

\noindent  \begin{tabular}{|l|l|l|l|p{2.2cm}|l|}
\hline
Replacement & = on &  $(\la v.A)B$ &  Extra &$\rrightarrow_r$ & CR\\
$A_{v:=B}$ & ${\cal M}$ &  $ \rightarrow_r ?$  & & &\\
\hline
$A\{v:=B\}$ & $=_{\cal M}$& $r =\be^w$ \xmark  & &  &\\
& & ? is $A_{v:=B}$ & &&\\
\hline
$A\langle\langle v:=B\rangle\rangle$ &$=_{\cal M}$ & $r = \overline{\be}$&&ref.\ tran.\ clos. & \xmark \\
 & &? is $A_{v:=B}$  & & of $\rightarrow_r$ & \\
\hline
$A\langle\langle v:=B\rangle\rangle$& $=_{\cal M}$ & $r = {\beal} = \overline{\be}$ &$=_{\alpha}$ &ref.\ tran.\ clos. & \checkmark\\
& & ? is $A_{v:=B}$ & & of $\rightarrow_r\cup\rightarrow_\alpha$ & \\
\hline
$A\{v:=B\}$& $=_{\cal M}$ & $r = {\beall}$ & $=_{\alpha'}$&ref.\ tran.\ clos. &\checkmark\\
& & ? is $A'_{v':=B'}$, & & of $\rightarrow_r\cup\rightarrow_\alpha$ & \\
& &  $(\la v.A)B=_{\alpha'}$ & & & \\
& &  $(\la v'.A')B' $ clean & & & \\
\hline
A[v:=B]& $=_{\alpha}$ & $r = \be $&var. &ref.\ tran.\ clos.& \checkmark\\
 & & ? is $A_{v:=B}$ &conv. & of $\rightarrow_r$ & \\
\hline
 \end{tabular}

\section{The $\la$-calculus with de Bruijn indices,\\ towards Explicit Substitutions}
\label{secdeb}
We can  avoid the problem of variable capture by getting rid of variables and using {\em De Bruijn indices} which are natural numbers that represent the occurrences of variables in the term.    In this section, we  introduce the $\la$-calculus with de Bruijn indices and look at the substitution and reduction rules with de Bruijn indices which will lead us naturally to a calculus with explicit substitutions.
\subsection{The classical $\la$-calculus with de Bruijn indices}
In this approach, an index $n$ in a term $A$ is bound  by the $n$th $\la$ on the left of $n$. For example, using de Bruijn indices, 
we write  $\la xy.yx$ and $\la xy.xy$
as follows:
  \begin{center}
  \begin{picture}(70,20)(0,0)
\put(0,12){\line(1,0){30}} 
\put(0,12){\line(0, -1){7}} 
\put(10,10){\line(0, -1){5}} 
\put(20,10){\line(0, -1){5}} 
\put(30,12){\line(0, -1){7}} 
\put(10,10){\line(1,0){10}} 
\put(-5,-5){\makebox(10,10){$\la$}} 
\put(5,-5){\makebox(10,10){${\la}$}} 
\put(15,-5){\makebox(10,10){${1}$}}
\put(25,-5){\makebox(10,10){$2$}} 
\end{picture} 
\begin{picture}(70,20)(0,0)
\put(0,12){\line(1,0){20}} 
\put(0,12){\line(0, -1){7}} 
\put(10,10){\line(0, -1){5}} 
\put(20,12){\line(0, -1){7}} 
\put(30,10){\line(0, -1){5}} 
\put(10,10){\line(1,0){20}} 
\put(-5,-5){\makebox(10,10){$\la$}} 
\put(5,-5){\makebox(10,10){${\la}$}} 
\put(15,-5){\makebox(10,10){${2}$}}
\put(25,-5){\makebox(10,10){$1$}} 
\end{picture}
\end{center}
  If $n$ is free in $A$, then we look in a so-called {\em free variable list}.  Say: \[x, y, z, x', y', z', x'', y'', z'', \dots\]

For example, the trees of $\la x.\la y. zxy$ and its translation $\la \la 5 2 1$ are as follows (here $\delta$ stands for application, the trees are drawn horizontally to save space, and the dashed lines represent the free variable list):
  \begin{center}
\setlength{\unitlength}{3.5mm}
\begin{picture}(20,5)(0,-1)
 \thicklines
\multiput(0.2,2)(1,0){6}{\line(1,0){0.6}}
\put(6.2,2){\line(1,0){2}}
\put(0,2){\circle*{.3}}
 \put(2,2){\circle*{.3}}
 \put(4,2){\circle*{.3}}
 \put(6,2){\circle*{.3}}
 \put(8,2){\circle*{.3}}
 \put(10,2){\circle*{.3}}
 \put(12,2){\circle*{.3}}
 \put(14,2){\circle*{.3}}
 \put(10,4){\circle*{.3}}
 \put(12,4){\circle*{.3}}
 \put(10,2){\line(0,1){2}}
 \put(12,2){\line(0,1){2}}
 \put(8,2){\line(1,0){6}}
 \put(0,2){\makebox(1,1){$\lambda z$}}
 \put(2,2){\makebox(1,1){$\lambda y$}}
 \put(4,2){\makebox(1,1){$\lambda x$}}
 \put(6,2){\makebox(1,1){$\lambda x$}}
 \put(8,2){\makebox(1,1){$\la y$}}
 \put(10,2){\makebox(1,1){$\delta$}}
 \put(12,2){\makebox(1,1){$\delta$}}
 \put(14,2){\makebox(1,1){$z$}}
 \put(10,4){\makebox(1,1){$y$}}
 \put(12,4){\makebox(1,1){$x$}}
\thinlines
\put(14,2.2){\line(-1,0){1.4}}
\put(11.8,2.2){\line(-1,0){1.4}}
\put(9.8,2.2){\line(-1,0){1.4}}
\put(9.8,2.75){\vector(-1,0){1.4}}
\put(7.8,2.2){\line(-1,0){1.4}}
\put(5.8,2.2){\line(-1,0){1.4}}
\put(3.8,2.2){\line(-1,0){1.4}}
\put(1.8,2.2){\vector(-1,0){1.4}}
\put(9.8,4){\line(0,-1){1.4}}
\put(11.8,3){\vector(-1,0){5.6}}
\put(11.8,4){\line(0, -1){1}}
\end{picture}
\vspace{-0.5in}
\begin{picture}(20,5)(0,-1)
 \thicklines
\multiput(0.2,2)(1,0){6}{\line(1,0){0.6}}
\put(6.2,2){\line(1,0){2}}
\put(0,2){\circle*{.3}}
 \put(2,2){\circle*{.3}}
 \put(4,2){\circle*{.3}}
 \put(6,2){\circle*{.3}}
 \put(8,2){\circle*{.3}}
 \put(10,2){\circle*{.3}}
 \put(12,2){\circle*{.3}}
 \put(14,2){\circle*{.3}}
 \put(10,4){\circle*{.3}}
 \put(12,4){\circle*{.3}}
 \put(10,2){\line(0,1){2}}
 \put(12,2){\line(0,1){2}}
 \put(8,2){\line(1,0){6}}
 \put(0,2){\makebox(1,1){$\lambda$}}
 \put(2,2){\makebox(1,1){$\lambda$}}
 \put(4,2){\makebox(1,1){$\lambda$}}
 \put(6,2){\makebox(1,1){$\lambda$}}
 \put(8,2){\makebox(1,1){$\la $}}
 \put(10,2){\makebox(1,1){$\delta$}}
 \put(12,2){\makebox(1,1){$\delta$}}
 \put(14,2){\makebox(1,1){$5$}}
 \put(10,4){\makebox(1,1){$1$}}
 \put(12,4){\makebox(1,1){$2$}}

\thinlines
\put(14,2.2){\line(-1,0){1.4}}
\put(11.8,2.2){\line(-1,0){1.4}}
\put(9.8,2.2){\line(-1,0){1.4}}
\put(9.8,2.75){\vector(-1,0){1.4}}
\put(7.8,2.2){\line(-1,0){1.4}}
\put(5.8,2.2){\line(-1,0){1.4}}
\put(3.8,2.2){\line(-1,0){1.4}}
\put(1.8,2.2){\vector(-1,0){1.4}}
\put(9.8,4){\line(0,-1){1.4}}
\put(11.8,3){\vector(-1,0){5.6}}

\put(11.8,4){\line(0, -1){1}}
\end{picture}
  \end{center}

\begin{definition}
\label{syntdebi}
We define $\La$, the {\em set of terms with de Bruijn indices},  as follows:
\begin{center}
$\La::=\mathbb{N}\:\:|\:\:(\La \La)\:\:|\:\:(\la \La)\;\;\;\;\;\;\,$ 
\end{center}
As for $\cM$, we use $ A, B,  \dots$ to range over $\La$.  We also use $m, n,  \dots$ to range 
over $\mathbb{N}$ (positive natural numbers). 
\end{definition}
We use similar notational conventions and compatibility rules as before.  However,  here we cannot compress a sequence of $\lambda$'s to one.  While we write $\la z. \la y.yz$ as $\la zy.yz$, we cannot write 
 $\la \la 1 2$  as $\la 1 2$ (which is $\la y. yx$).

\medskip

Before we can do substitution, we must learn how to update  variables.  This will be needed.  For example, $\be$-reducing 
$(\la\la {\texttt 521})(\la {\texttt 31})$ should result in  $\la {\texttt 521}$ where the variable $2$ that was bound by the $\la$ that disappeared is replaced by the argument $\la 3 1$.  But we cannot just put $\la 31$ instead of 2 in $\la 521$.  First, the 5 should be decreased by 1 because one $\la$ has disappeared from $\la \la {\texttt 521}$.  Also, the 3 of $\la 31$ should be increased by 1 when it is inserted in the hole of $\la 5[\:]1$ since the $[\:]$ is inside an extra $\la$ and so all the free variables of $\la 31$ must be increased.   The next definition introduces this updating.  
The intuition behind $\uik$ is that $k$ tests for free variables
and $i-1$ is the value by which a variable, if free, must be incremented.

\begin{definition} \label{uik}
The  {\em meta-updating functions}
$\uik :\La \rightarrow \La$ for  $k\geq 0$ and 
$i\geq 1$ are defined inductively as follows:\vspace{10pt}\\
\hspace*{25pt}$\begin{array}{l}
\uik (\ate\bte) \equiv \uik (\ate) \, \uik(\bte)\\[10pt]
\uik (\la \ate) \equiv \la (\uikl (\ate))
\end{array}$
\hspace*{16pt}$\begin{array}{l}
\uik ({\texttt n}) \equiv \left\{ \begin{array}{ll}
                           {\texttt n+i-1} & \mbox{if $\,\;n>k$}\\ 
                           {\texttt n} &\mbox{if $\,\;n\leq k\,.$}  
                         \end{array}
\right.
\end{array}$
\end{definition}

Using this updating, we define substitution in the obvious way.  The first two equalities propagate the substitution through applications
and abstractions and the last one carries out the substitution of the intended
variable (when $n=i$)
by the updated term. If the variable is not the intended one
it must be decreased by 1 if it is free (case $n>i$) because one $\la$ has 
disappeared, whereas if it is bound (case $n<i$) it must remain unaltered.

\begin{definition} \label{msubs}
The {\em meta-substitutions at level $\,i\,$}, for $\,i\geq 1\,$,  of  a term
$\,\bte\in \La\,$ in a term  $\,\ate\in\La\,$, denoted 
$\,\ate\{\!\!\{{\texttt i}\leftarrow \bte\}\!\!\}\,$\index{$\all{\ate}{i}{\bte}$}, is 
defined inductively on $A$ as follows:\\
$\begin{array}{ll}
(\tea_{1}\tea_{2})\{\!\!\{{  i}\leftarrow \bte\}\!\!\} & \equiv\:
(\tea_{1}\{\!\!\{{  i}\leftarrow \bte\}\!\!\}) \,
(\tea_{2}\{\!\!\{{  i}\leftarrow \bte\}\!\!\})\\
(\la \tea)\{\!\!\{{  i}\leftarrow \bte\}\!\!\} & \equiv\:
 \la (\tea\{\!\!\{{  i+1}\leftarrow \bte\}\!\!\})\\
{  n}\{\!\!\{{  i}\leftarrow \teb\}\!\!\}\: &\equiv \:
\left\{ \begin{array}{ll}
                           {  n-1} & \mbox{if $\,\;n>i$}\\
                           U^{i}_{0}(\teb) & \mbox{if $\,\;n=i$}\\ 
                           {  n} &\mbox{if $\,\;n<i\,.$}      
                         \end{array}
\right.
\end{array}$
\end{definition}

The following lemma establishes the properties of the meta-substit\-utions
and meta-updating functions. The proof of this lemma is  
obtained by induction on $A$ and can be found in~\cite{kamareddine/rios}
(the proof of \ref{l15} requires \ref{l14} with $p=0$;
the proof of \ref{lms} uses \ref{l13} and \ref{l15} both with $k=0$;
finally, \ref{l16} with $p=0$ is needed to prove \ref{ld}).

\begin{lemma}
\label{lemallsix}
\mbox{} \hfill
\begin{enumerate}
\item
\label{l13}
For $\,k<n\leq k+i\,$ we have:
$\uik(\tea) \equiv \all{U^{i+1}_k(\tea)}{n}{\teb}\,$.
\item
\label{l14}
For $\,p\leq k<j+p\,$ we have:
$\uik(U^j_p(\tea)) \equiv U^{j+i-1}_{p}(\tea)\,.$
\item
\label{l15}
For $\,i\leq n-k\,$ we have:
$\all{\uik(\tea)}{n}{\teb} \equiv \uik(\all{\tea}{n-i+1}{\teb})\,.$
\item
\label{lms}
[Meta-substitution lemma]
For $\,1\leq i\leq n\,$ we have:\\
$\all{\all{\tea}{i}{\teb}}{n}{\tec} \equiv \all{\all{\tea}{n+1}{\tec}}{i}{\all{\teb}{n-i+1}{\tec}}$.
\item
\label{l16}
For $\,m\leq k+1\,$ we have:
$U^i_{k+p}(U^m_p(\tea)) \equiv U^m_p(U^i_{k+p+1-m}(\tea))\,$.
\item
\label{ld}
[Distribution lemma]\\
For $\,n\leq k+1\,$ we have:
$\begin{array}{l} \uik (\all{\tea}{n}{\teb}) \equiv\\
\qquad \all{U^{i}_{k+1}(\tea)}{n}{U^i_{k-n+1} (\teb)}\,. \end{array}$
\end{enumerate}
\end{lemma}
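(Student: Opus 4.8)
The plan is to prove all six items by structural induction on the term $A\in\La$, carried out in the dependency order $\ref{l14}\to\ref{l13}\to\ref{l15}\to\ref{lms}\to\ref{l16}\to\ref{ld}$, so that when an item is reached it may freely cite the earlier ones at the specialised parameters flagged in the statement. Since terms of $\La$ carry no bound-variable names, here $\equiv$ is literal identity of de Bruijn terms, so each identity is really a finitary fact to be checked branch by branch. For a fixed identity the induction has the familiar three cases, and two of them are uniform across all six. In the \emph{application} case $A=A_1A_2$ both $\uik$ and the meta-substitutions are defined to commute with application, so each side splits into the same operation applied to $A_1$ and to $A_2$, and the induction hypothesis, at the \emph{same} parameters, closes the case at once. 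The \emph{abstraction} case $A=\la A'$ is where one does arithmetic: pushing $\uik$ past a $\la$ replaces $k$ by $k+1$, and pushing $\all{\cdot}{i}{B}$ past a $\la$ replaces $i$ by $i+1$; so after stripping the outer $\la$ from both sides one is left with the same identity for $A'$ but with each of $k,i,n$ increased by one, and it only remains to observe that the side condition of the lemma is invariant under exactly that simultaneous $+1$ shift (e.g.\ $k<n\le k+i$ becomes $k+1<n+1\le(k+1)+i$, and $n\le k+1$ becomes $n+1\le(k+1)+1$), after which the induction hypothesis applies.

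All the genuine content sits in the \emph{base} case $A=n$. There each identity collapses to a finite case split on how $n$ lies relative to the thresholds $k$, $k+i$, $i$, $n$, \dots\ appearing in the definitions of updating and meta-substitution; in most branches both sides reduce to an explicit natural number and equality is elementary arithmetic, the hypothesis of the lemma being used to discard impossible branches. The only delicate branches are those in which $n$ coincides with the variable being substituted, for then a genuine subterm $U^{i}_{0}(B)$ (or $U^{i}_{0}(C)$) appears on one side and must be reconciled with a composite updating/substitution of $B$ (resp.\ $C$) on the other; this is exactly where the earlier items are invoked. For \ref{l15} the reconciliation needed is $\uik(U^{n-i+1}_{0}(B))\equiv U^{n}_{0}(B)$, which is \ref{l14} at $p=0$, its hypothesis $k<n-i+1$ being precisely $i\le n-k$. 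For \ref{ld} it is $\uik(U^{n}_{0}(B))\equiv U^{n}_{0}(U^{i}_{k-n+1}(B))$, which is \ref{l16} at $p=0$, its hypothesis $n\le k+1$ being precisely the hypothesis of \ref{ld}. And for the meta-substitution lemma \ref{lms} two earlier items are needed in two different branches of the double case split, namely $U^{i}_{0}(B)\{\!\!\{n\leftarrow C\}\!\!\}\equiv U^{i}_{0}(B\{\!\!\{n-i+1\leftarrow C\}\!\!\})$ from \ref{l15} at $k=0$ and $U^{n}_{0}(C)\equiv U^{n+1}_{0}(C)\{\!\!\{i\leftarrow B\{\!\!\{n-i+1\leftarrow C\}\!\!\}\}\!\!\}$ from \ref{l13} at $k=0$ — and one checks that the side conditions of these specialised instances ($i\le n$, resp.\ $0<i\le n$) are exactly what the hypothesis $1\le i\le n$ of \ref{lms} delivers.

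I expect the main obstacle to be \ref{lms}, the meta-substitution lemma. It has by far the richest base-case analysis, a three-way split from the inner substitution composed with a three-way split from the outer one; it is the only item whose single base case needs \emph{two} distinct earlier identities; and getting the indices $n$, $n+1$, $i$ and $n-i+1$ on the right-hand side to line up with the outputs on the left requires genuine care. A secondary, purely clerical, hazard pervading the abstraction cases of all six items is off-by-one slips in the $+1$ shifts; the safeguard is to transport the side condition explicitly down each line and confirm it survives. Beyond disciplined induction and natural-number arithmetic nothing new is required, which is in the end the point the paper is making: once $\equiv$ has absorbed $\alpha$, substitution really does reduce to rules one could hand to a machine.
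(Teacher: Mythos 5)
Your proposal is correct and follows essentially the same route as the paper: induction on the structure of $A$, with item \ref{l15} invoking \ref{l14} at $p=0$, item \ref{lms} invoking \ref{l13} and \ref{l15} at $k=0$, and item \ref{ld} invoking \ref{l16} at $p=0$ --- exactly the dependency structure the paper records when deferring the details to the cited reference. The base-case arithmetic and the side-condition bookkeeping you describe all check out.
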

Case~\ref{lms} is the version of Lemma~\ref{orderedmeta-sublema}.\ref{orderedmeta-sublema3} using de Bruijn indices.

\begin{definition} \label{beta}
{\em $\be_1$-reduction} is the least compatible relation on $\La$
generated by:\vspace{3pt}\\
\hspace*{35pt}($\be_1$-rule)\hspace{20pt} 
$(\la \tea)\,\teb \rightarrow_{\be_1} \all{\tea}{1}{\teb}$\vspace{3pt}\\
We define $\rrightarrow_{\be_1}$ as the reflexive transitive closure of $\rightarrow_{\be_1}$.
\end{definition}

It is easy to check ,that 
$\all{(\la {\texttt 521})}{1}{(\la {\texttt 31})} \equiv\la {\texttt 4}(\la {\texttt 41})\um$
and hence
$(\la\la {\texttt 521})(\la {\texttt 31})\rightarrow_{\be_1}\la {\texttt 4}(\la {\texttt 41})\um$.

The $\la$-calculi with variable names and with de Bruijn indices  
are isomorphic  (there are translation functions 
between $\cM$ and $\La$ which are inverses of each other and which
preserve their respective $\be$-reductions, see~\cite{mauny}). 
\subsection{The classical $\la$-calculus with de Bruijn indices and explicit substitutions}
\label{fromdeBtoexp}
 Although the $\la$-calculus with de Bruijn indices is not easy for humans, it is very straightforward for machines to implement its meta-updating, meta-substitution and beta rules.   In fact, the $\la$-calculus with de Bruijn has an important place in the implementations of functional languages. In this section, we take the classical $\la$-calculus with de Bruijn indices exactly  as it is, but simply turn its meta-updating and meta-substitution to the object level to obtain a calculus of explicit substitutions.  Extending $\la$-calculi with explicit substitutions is essential for the implementations of these calculi.
\begin{definition}[Syntax of the $\la s$-calculus]
Terms of the  $\las$-calculus are given by:\vspace{3pt}

$\Las::=\mathbb{N}\:\:|\:\:(\Las \Las)\:\:|\:\:(\la \Las)\:\:|\:\:(\Las\si\Las)\:\:|\:\:(\fik \Las)\;\;\;\;\;
\, where \; \;\;i\geq 1\, ,\;\;k\geq 0\, .$
\end{definition}
We use the notational conventions to get rid of
unnecessary parenthesis.

Now, we need to include reduction rules that operate on the new terms
built with updating and substitutions.  Definitions~\ref{uik}
and~\ref{msubs} suggest these rules.  The resulting
calculus is the explicit substitution calculus $\la s$ of~\cite{kamareddine/rios}
whose set of rules is given in Figure~\ref{figurelas}.  Note that
these rules are nothing more than $\be_1$ written now as
$\sigma$-generation, together with the rules of Definitions~\ref{uik}
and~\ref{msubs} oriented as expected.
\begin{definition}
The set of rules $\las$ is given in Figure~\ref{figurelas}.
\begin{figure}
\vspace{10pt}
\begin{center}
\begin{tabular}{lrcl}
 & & & \\[-6pt]
\hspace{8pt}{\it $\sigma$-generation} & $(\la \tea)\,\teb$ & $\lar$ & $\tea\,\sigma^1\,\teb$\\[3pt]
\hspace{8pt}{\it $\sigma$-$\la$-transition} & $(\la \tea)\si \teb$ & $\lar$ & $\la (\tea\sil \teb)$\\[3pt]
\hspace{8pt}{\it $\sigma$-app-transition} & $\;\;\;(\tea_1\,\tea_2)\si \teb$ & $\lar$ & $(\tea_1\si \teb)\,(\tea_2\si \teb)$\\[5pt]
\hspace{8pt}{\it $\sigma$-destruction} & ${\texttt n}\si \teb$ & $\lar$ & 
  $\left\{ \begin{array}{ll}
                           {\texttt n-1} & \mbox{if $\,\;n>i$}\\
                           \varphi^{i}_{0}\,\teb & \mbox{if $\,\;n=i$}\\ 
                           {\texttt n} &\mbox{if $\,\;n<i\,$}      
                         \end{array}
  \right.$\\[20pt]
\hspace{8pt}{\it $\varphi$-$\la$-transition} & $\fik(\la \tea)$ & $\lar$ & $\la (\fikl\, \tea)$\\[5pt]
\hspace{8pt}{\it $\varphi$-app-transition} & $\fik (\tea_1\,\tea_2)$ & $\lar$ & $(\fik\, \tea_1)\,(\fik\, \tea_2)$\\[5pt]
\hspace{8pt}{\it $\varphi$-destruction} & $\fik\, {\texttt n}$ & $\lar$ &
  $\left\{ \begin{array}{ll}
                           {\texttt n+i-1} & \mbox{if $\,\;n>k$}\\ 
                           {\texttt n} &\mbox{if $\,\;n\leq k\,\;\;\;$}  
                         \end{array}
 \right.$\\[-6pt]
& & & \\
\end{tabular}
\caption{The $\las$-rules}
\label{figurelas}
\end{center}
\end{figure}
The {\em $\las$-calculus} is the reduction system
$(\Las, \rightarrow_{\las})$ where $\rightarrow_{\las}$ is the least compatible
reduction on $\Las$ generated by the set of rules $\las$.
\end{definition}

\cite{kamareddine/rios} establishes that the $s$-calculus (i.e., the reduction system whose rules are those of 
Figure~\ref{figurelas} excluding $\sigma$-generation) is strongly
normalising, that the $\la s$-calculus is confluent, simulates
$\beta$-reduction and has the property of preservation of strong
normalisation PSN (i.e., if a term terminates in the calculus with de
Bruijn indices, then it terminates
in the $\la s$-calculus).

In Definition~\ref{lambdatermsdefs} we presented contexts with one hole which we used in some of our statements and proofs about $\la$-terms.  In fact, contexts (with one or more holes, also known as open terms) are an important aspect of $\la$-calculi and their implementations.  Since extending $\la$-calculi with explicit substitutions is crucial for implementations, these extensions must also cover terms with holes.  To extend the $\la s$-calculus with open terms, we need to add to the syntax of $\la s$ variables $X, Y, \cdots$ that range over terms.  However, just adding these variables to the syntax of $\la s$ and keeping the rules of Figure~\ref{figurelas} as they are does not guarantee confluence.
For example 
$((\la X)Y )\sigma^1 1 \rightarrow  (X\sigma^1 Y)\sigma^1 1$
and
$((\la X)Y )\sigma^1 1 \rightarrow  ((\la X)\sigma^1 1)(Y\sigma^1 1)$
but 
$(X\sigma^1 Y)\sigma^1 1$ and $((\la X)\sigma^1 1)(Y\sigma^1 1)$
have no common reduct.  In addition to extending the syntax with variables that range over terms, we need to extend the rules to guarantee confluence.  The extra rules needed are none other than those of Lemma~\ref{lemallsix}  oriented in the obvious way.  This results in the
calculus $\la s_e$ which is confluent on open terms~\cite{jfp}.  Like $\ls$
of~\cite{abadcardcurilevy91}, this calculus does not satisfy PSN.

\begin{definition}[The $\la s_e$-calculus]
Terms of the $\las_e$-calculus are given by:\\
$\Las_{op}::=\mbox{\bf $V$}\:\:|\:\:\mathbb{N}\:\:|\:\:(\Las_{op}\Las_{op})\:\:|\:\:(\la \Las_{op})\:\:|\:\:(\Las_{op}\sj\Las_{op})\:\:|\:\:(\fik \Las_{op})$ where $j,\,i\geq 1$, $k\geq 0$
and  {\bf V} stands for a set of variables, over which $X$, $Y$, ... range.

The set of rules $\las_e$ is obtained by adding the rules given in 
Figure~\ref{figure2} to the set $\las$ of Figure~\ref{figurelas}.
\begin{figure}
{\tiny
\begin{center}
\noindent  \begin{tabular}{lrclll}
    & & & & &\\
{\it $\sigma$-$\sigma$} & $(\tea\si \teb)\,\sigma^j\, \tec$
& 
$\lar$ &
    $(\tea\,\sigma^{j+1}\, \tec)\,\si\, (\teb\,\sigma^{j-i+1}\, \tec)$
 &if \hspace{1pt}& $i\leq j$\\
{\it $\sigma$-$\varphi_1$} & $(\fik\, \tea)\,\sigma^j\, \teb$ & $\lar$ &
    $\varphi^{i-1}_k\, \tea$ &if& $k<j< k+i$ \\
{\it $\sigma$-$\varphi_2$} & $(\fik\, \tea)\,\sigma^j\, \teb$ & $\lar$ &
    $\fik (\tea\,\sigma^{j-i+1}\, \teb)$ &if& $k+i\leq j$\\
{\it $\varphi$-$\sigma$} & $\fik (\tea\,\sigma^j\, \teb)$ & $\lar$ &
    $(\fikl\, \tea)\,\sigma^j\, (\varphi^i_{k+1-j}\, \teb)$ &if& $j\leq k+1$\\
{\it $\varphi$-$\varphi_1$} & $\fik\, (\varphi^j_l\, \tea)$ & $\lar$ &
    $\varphi^j_l\, (\varphi^i_{k+1-j}\, \tea)$ &if& $ l+j\leq k$\\
{\it $\varphi$-$\varphi_2$} & $\fik \,(\varphi^j_l\, \tea)$ & $\lar$ &
    $\varphi^{j+i-1}_l\, \tea$ &if& $l\leq k <l+j$\\ 
\end{tabular}
\caption{The new rules of the $\las_e$-calculus}
\label{figure2} 
\end{center}
}
\end{figure}
The {\em $\las_e$-calculus} is the reduction system
$(\Las_{op}, \rightarrow_{\las_e})$ where $\rightarrow_{\las_e}$ is the least compatible
reduction on $\Las_{op}$ generated by the set of rules $\las_e$.
\end{definition}

\section{Conclusion}
\label{secconc}
   In this paper we discussed a number of approaches for substitution and reduction in the $\lambda$-calculus, Most of which are influenced by the Curry school and none of which would make sense without the lessons learned from the Curry tradition.  These notes are based on students questions where they  wanted to see the build up of terms and computation stepwise from the bottom up and without hidden steps or working modulo classes. By working through grafting, replacement using ordered variable lists, and then different reduction relations based on this grafting and replacement as well as understanding the role of variable renaming,  
  the students are able to appreciate the move to manipulating terms modulo $\alpha$-classes and then seem to appreciate de Bruijn indices and the $\la$-calculus \`a la de Bruijn with or without explicit substitutions. The $\beal$ reduction we saw here is the same reduction relation given in Hindley and Seldin's book~\cite{hindley/seldin:LCCI} and is an accurate representation of how we should manage variables, substitution and alpha conversion.
 
\section*{Acknowledgements}
I am very appreciative for Jonathan Seldin for all the wonderful discussions and collaboration and friendship we have had for a quarter of a century.

\end{document}